\newtheorem{theorem}{Theorem}
\newtheorem{corollary}{Corollary}
\newtheorem{remark}{Remark}
\newcommand{\dff}{\stackrel{\scriptscriptstyle\triangle}{=}}
\begin{document}
\title{Multiaccess Communication via a Broadcast Approach\\ 
Adapted to the Multiuser Channel}

\author{Samia Kazemi and  Ali Tajer\thanks{Authors are with the Electrical, Computer, and Systems Engineering Department, Rensselaer Polytechnic Institute, Troy, NY 12180.}
}
\date{}

\maketitle
\allowdisplaybreaks

\begin{abstract}
A broadcast strategy for multiple access communication over slowly fading channels is introduced, in which the channel state information is known to only the receiver. In this strategy, the transmitters split their information streams into multiple independent information streams, each adapted to a specific actual channel realization. The major distinction between the proposed strategy and the existing ones is that in the existing approaches, each transmitter adapts its transmission strategy only to the fading process of its direct channel to the receiver, hence directly adopting a single-user strategy previously designed for the single-user channels. However, the contribution of each user to a network-wide measure (e.g., sum-rate capacity) depends not only on the user's direct channel to the receiver, but also on the qualities of other channels. Driven by this premise, this paper proposes an alternative broadcast strategy in which the transmitters adapt their transmissions to the combined states resulting from all users' channels. This leads to generating a larger number of information streams by each transmitter and adopting a different decoding strategy by the receiver. An achievable rate region and an outer bound that capture the trade-off among the rates of different information layers are established, and it is shown that the achievable rate region subsumes the existing known capacity regions obtained based on adapting the broadcast approach to the single-user channels. 
\end{abstract}
\textbf{{Index Terms} }-- Broadcast approach, layered coding, multiple access, successive decoding.

\section{Introduction}
\label{sec:intro}

Random fluctuations of the wireless channel states induce uncertainty about the network state at all transmitter and receiver sites \cite{Biglieri}. Slowly varying channels can be estimated by the receivers with high fidelity, rendering the availability of the channel state information (CSI) at the receiver. Acquiring the CSI by the transmitters can be further facilitated via feedback from the receivers, which incurs additional communication and delay costs. The instantaneous and ergodic performance limits of the multiple access channel (MAC) with the CSI available to all transmitters and the receiver are well-investigated~\cite{Biglieri,ahlsewde,liao}. In certain communication scenarios, however, acquiring the CSI by the transmitters is not viable due to, e.g., stringent delay constraints or excessive feedback costs. In such scenarios, the notion of outage capacity evaluates the likelihood for the reliable communication for a fixed transmission rate~\cite{Ozarow}. When the actual channel realization can sustain the rate, transmission is carried out successfully, and otherwise, it fails and no message is decoded~\cite{Biglieri,Ozarow}. The notions of outage and delay-limited capacities are studied extensively for various networks including the multiple access channel (c.f.~\cite{HanlyTse,LiJindalGoldsmith,narasimhan,Haghi,DasNarayan,jafar} and references therein).

Superposition coding is shown to be an effective approach for circumventing CSI uncertainty at the transmitters. The underlying motivation for this approach is that each transmitter splits its data stream into a number of independently generated coded streams with possibly different rates. These streams are superimposed and transmitted by the designated transmitter, and the receiver decodes as many streams as the quality of the channel affords. The aggregate rate of transmission, subsequently, is the sum of individual rates of the streams decoded by the receiver. Motivated by superposition coding, and following the broadcast approach to compound channels~\cite{Cover:IT72}, the notion of broadcast strategy for slowly fading single-user channel was initially introduced for effective single-user communication~\cite{Shamai:ISIT97}. In this approach, any channel realization is viewed as a broadcast receiver, rendering an equivalent network consisting of a number of receivers. Each receiver is designated to a specific channel realization and is degraded with respect to a subset of other channels. The broadcast strategy is further generalized to single-user channels with mixed delay constraints in \cite{kfirSteinerShamai}, and single-user multi-antenna channels \cite{Shamai:IT03}, where the singular values of channel matrices are leveraged to rank and order the degradedness of different channel realizations. 

The effectiveness of broadcast strategy for multiuser channels is investigated in~\cite{ShamaiMac} and \cite{Minero:ISIT07} for the settings in which the transmitters have uncertainties about all channels, and in~\cite{Shamai:ISIT2013} for the settings in which each transmitter has uncertainties about the channels of other users. Specifically, the approaches in~\cite{ShamaiMac} and \cite{Minero:ISIT07} adopt the broadcast strategy designed for single-user channels, and directly apply it to the MAC. As a result, each transmitter generates a number of information streams, each adapted to a specific realization of the direct channel linking the transmitter to the receiver. An alternative scenario in which each transmitter has the CSI of its direct channel to the receiver while being unaware of the states of other users' channels is studied in~\cite{Shamai:ISIT2013}, where a transmission approach based on rate splitting and sequential decoding is proposed.

In this paper, we take a different approach based on the premise that the contribution of each user to the overall performance of the multiple access channel not only depends on the direct channel linking this user to the receiver, but also is influenced by the {\em relative} qualities of the other users' channels. Hence, we propose a strategy in which the information streams are generated and adapted to the combined state of the channel resulting from incorporating all individual channel states. In order to highlight the distinction with the existing approaches, consider a two-user MAC in which each channel takes one of the two possible states, referred to as {\em weak} and {\em strong} channels. The approach of~\cite{Minero:ISIT07} assigns two streams to each transmitter, one apt for the weak channel, and the second one suited to the strong channel. Each transmitter generates and transmits these streams without regards for the possible states of the other user's channel. In the proposed approach, in contrast, we leverage the fact that the two channels take a combination of four possible states. Hence, every transmitter generates four information streams, each suited to one of the four possible states. The proposed approach leads to an equivalent network with a number of receivers each corresponding to one possible combination of all channels. We provide an achievable rate region and an outer bound on the capacity of this resulting multi-terminal network, and show that the achievable rate region of this network is considerably larger than the capacity region of the model presented in~\cite{Minero:ISIT07}. The proposed approach is further extended from the two-state channel to the general finite-state channels, and the corresponding achievable rate region is characterized.  We remark that the  discrepancy and improvement in the capacity region compared to~\cite{Minero:ISIT07} is due modeling the channel differently, which facilitates a finer resolution in adapting the codebooks to the channel states as well as in decoding them. 

The remainder of the paper is organized as follows. The finite-state channel model is presented in Section~\ref{sec:model}. The layering (rate-splitting) and the successive decoding approach, which constitute the proposed broadcast approach, as well as an achievable rate region and an outer bound for the two-state channel are presented in Section~\ref{sec:two}. The proposed approach and the achievable rate region are generalized to the finite-state channel setting in Section~\ref{sec:finite}, and Section~\ref{sec:conclusions} concludes the paper.

\vspace{-.05 in}
\section{Channel Model}
\label{sec:model}

Consider a two-user multiple access channel, in which  two independent users transmit independent messages to a common receiver via a discrete-time Gaussian multiple-access fading channel.  All the users are equipped with one antenna and the random channel coefficients  independently take one of the $\ell\in\mathbb{N}$ distinct values, denoted by $\{\sqrt\alpha_m: m\in\{1,\dots,\ell\}\}$. The fading process is assumed to remain unchanged during each transmission cycle, and can change to independent states afterwards. Channel states are {\em unknown} to  transmitters, while the receiver is assumed to have full CSI. The users are subject to an average transmission power constraint $P$. By defining $X_i$ as the signal of transmitter $i\in\{1,2\}$ and $h_i$ as the coefficient of the channel linking transmitter $i\in\{1,2\}$ to the receiver, the received signal is
\begin{equation} \label{eq:model}
Y = h_1X_1 + h_2X_2 + N\ ,
\end{equation}
where $N$ accounts for the additive white Gaussian with mean zero and variance 1. Depending on the realization of the channels $h_1$ and $h_2$, the multiple access channel can be in one of the $\ell^2$ possible states. By leveraging the broadcast approach (c.f. \cite{Shamai:ISIT97, Shamai:IT03}, and \cite{Minero:ISIT07}), the communication model in \eqref{eq:model} can be equivalently presented by a broadcast network that has two inputs $X_1$ and $X_2$ and $\ell^2$ outputs. Each output corresponds to one possible combinations of channels $h_1$ and $h_2$. We denote the output corresponding to the combination $h_1=\sqrt{\alpha_m}$ and $h_2=\sqrt{\alpha_n}$ by
\begin{equation} \label{eq:model2}
Y_{mn} = \sqrt{\alpha_m} X_1 + \sqrt{\alpha_n} X_2 + N_{mn}\ ,
\end{equation}
where $N_{mn}$ is a standard Gaussian random variable for all $m,n\in \{1,\dots,\ell\}$. Figure~\ref{fig:network} depicts this network for the case of the two-state channels ($\ell=2$). Without loss of generality and for the convenience in notations, we assume that channel gains $\{\alpha_m: m\in\{1,\dots,\ell\}\}$ take real positive values and are ordered in the ascending order, i.e., 
\begin{align}
0<\alpha_1< \alpha_2< \dots < \alpha_\ell\ .
\end{align}
We use the notation $C(x,y)\dff \frac{1}{2}\log_2\Big(1+\frac{x}{y+\frac{1}{P}}\big)$ throughout the paper.
\begin{figure}[t]
\centering
\includegraphics[width=2.2in]{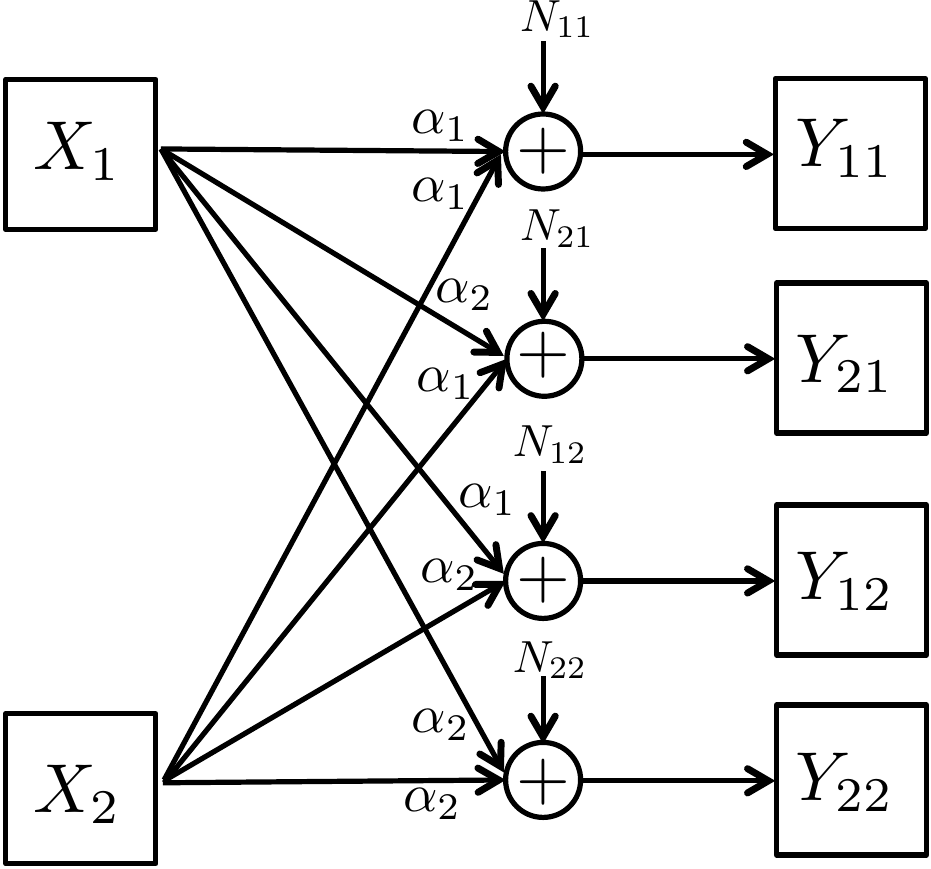}
\caption{Equivalent degraded broadcast channel corresponding to a two user four state multiple access channel with channel gains $\alpha_1$ and $\alpha_2$.}
\label{fig:network}
\end{figure}

\vspace{-.1 in}
\section{Two-state Channels ($\ell=2$)}
\label{sec:two}

We start by analyzing the setting in which the channels take one of the two possible values, i.e., $\ell=2$. This setting furnishes the context in order to highlight the differences between the proposed streaming and successive decoding strategy in this paper and those investigated in \cite{Minero:ISIT07}. By leveraging the intuition gained from the two-state setting, we generalize the codebook generation and the successive decoding strategies to accommodate a fading process with any arbitrary number of finite channel states in Section~\ref{sec:finite}. Throughout the rest of this section, we refer to channels $\alpha_1$ and $\alpha_2$ as the {\em weak} and {\em strong} channels, respectively.

\subsection{Background: Adapting streams to the Single-user Channels}

In order to motivate the proposed approach, we start by reviewing the broadcast strategy concept for a single-user channel introduced in~\cite{Shamai:ISIT97}, and its generalization for the two-user multiple access channel investigated in~\cite{Minero:ISIT07}. When facing a two-state channel, the single-user strategy of~\cite{Shamai:ISIT97} splits the information stream of the transmitter into two streams, each corresponding to one fading state, and encodes them independently. The two encoded information streams are subsequently superimposed and transmitted over the channel. One of the streams,  denoted by $W_1$, is always decoded by the receiver, while the second stream, denoted by $W_2$, is decoded only when the channel is {\em strong}. The successive decoding order adopted in this approach is presented in Table~\ref{table:1}.

{\footnotesize
\begin{table}[!h]
\renewcommand{\arraystretch}{1.3}
\caption{Successive decoding order of~\cite{Shamai:IT03}}
\label{table:1}
\centering
\begin{tabular}{ |c||c|c|c|c| } 
 \hline
 $h^2$ & Decoding stage 1 & Decoding stage 2\\ 
 \hline\hline
 $\alpha_1$ & $W_{1}$ &\\ 
 \hline
 $\alpha_2$ & $W_{1}$ & $W_{2}$\\
 \hline
\end{tabular}
\end{table}
}
This strategy is adopted and directly applied to the multiple access channel in~\cite{Minero:ISIT07}. Specifically, it generates two coded information streams per transmitter, where the streams of user $i\in\{1,2\}$ are denoted by $\{W^i_1,W^i_2\}$. Based on the actual realizations of the channels, a combination of these streams are successively decoded by the receiver. 
In the first stage, the baseline streams $W^1_1$ and $W^2_1$, which constitute the minimum amount of guaranteed information, are decoded. Additionally, when the channel between transmitter $i$ and the receiver, i.e., $h_i$ is strong, in the second stage information stream $W^i_2$ is also decoded. Table~\ref{table:tse} depicts the decoding sequence corresponding to each of the four possible channel combinations.

\begin{table}[!h]
\renewcommand{\arraystretch}{1.3}
\caption{Successive decoding order of~\cite{Minero:ISIT07}}
\label{table:tse}
\centering
\begin{tabular}{ |c||c|c|c|c| } 
 \hline
 $(h_1^2,h_2^2)$ & Decoding stage 1 & Decoding stage 2\\ 
 \hline\hline
 $(\alpha_1,\alpha_1)$ & $W^1_1,W^2_1$ &\\ 
 \hline
 $(\alpha_2,\alpha_1)$ & $W^1_1,W^2_1$ & $W^1_2$\\ 
 \hline
 $(\alpha_1,\alpha_2)$ & $W^1_1,W^2_1$ & $W^2_2$\\ 
 \hline
 $(\alpha_2,\alpha_2)$ & $W^1_1,W^2_1$ & $W^1_2,W^2_2$\\ 
 \hline
\end{tabular}
\end{table}

\vspace{-.1 in}
\subsection{Adapting streams to the MAC}
Contribution of user $i\in\{1,2\}$ to a network-wide performance metric (e.g., sum-rate capacity) depends not only on the quality of the channel $h_i$, but also on the quality of the channel of the other user. This motivates assigning more information streams to user $i$ and adapting them to the {\em combined} effect of {\em both} channels, instead of adapting them only to channel $h_i$. Designing and assigning more than two information streams to each transmitter facilitates a finer resolution in successive decoding, which in turn expands the capacity region characterized in~\cite{Minero:ISIT07}.

We assume that each transmitter splits its message into {\em four} streams corresponding to the four possible combinations of  the two channels. These codebooks for transmitter $i\in\{1,2\}$ are denoted by $\{W^i_{11},W^i_{12},W^i_{21},W^i_{22}\}$, where the information stream $W^i_{uv}$ is associated with the channel realization in which the channel gain of user $i$ is $\alpha_v$, and the channel gain of the other user is $\alpha_u$. These stream assignments are demonstrated in Fig.~\ref{fig:code}.

The initial streams $\{W^1_{11},W^2_{11}\}$ account for the minimum amount of guaranteed information, which are adapted to the channel combination $(h^2_1,h^2_2)=(\alpha_1,\alpha_1)$ and should be decoded by all four possible channel combinations. When at least one of the channels is strong, the remaining codebooks are grouped and adapted to different channel realizations according to the assignments described in Fig.~\ref{fig:code}. Specifically: 
\begin{itemize}
\item The second group of streams $\{W^1_{12},W^2_{21}\}$ are reserved to be decoded in addition to  $\{W^1_{11},W^2_{11}\}$ when $h_1$ is strong, while $h_2$ is still weak. 
\item Alternatively, when $h_1$ is weak and $h_2$ is strong, instead the third group of streams, i.e., $\{W^1_{21},W^2_{12}\}$, are decoded.
\item Finally, when both channels are strong, in addition to all the previous streams, the fourth group $\{W^1_{22},W^2_{22}\}$ is also decoded. 
\end{itemize}\vspace{-.1 in}

\begin{figure}[!h]
\centering
\includegraphics[width=3.5in]{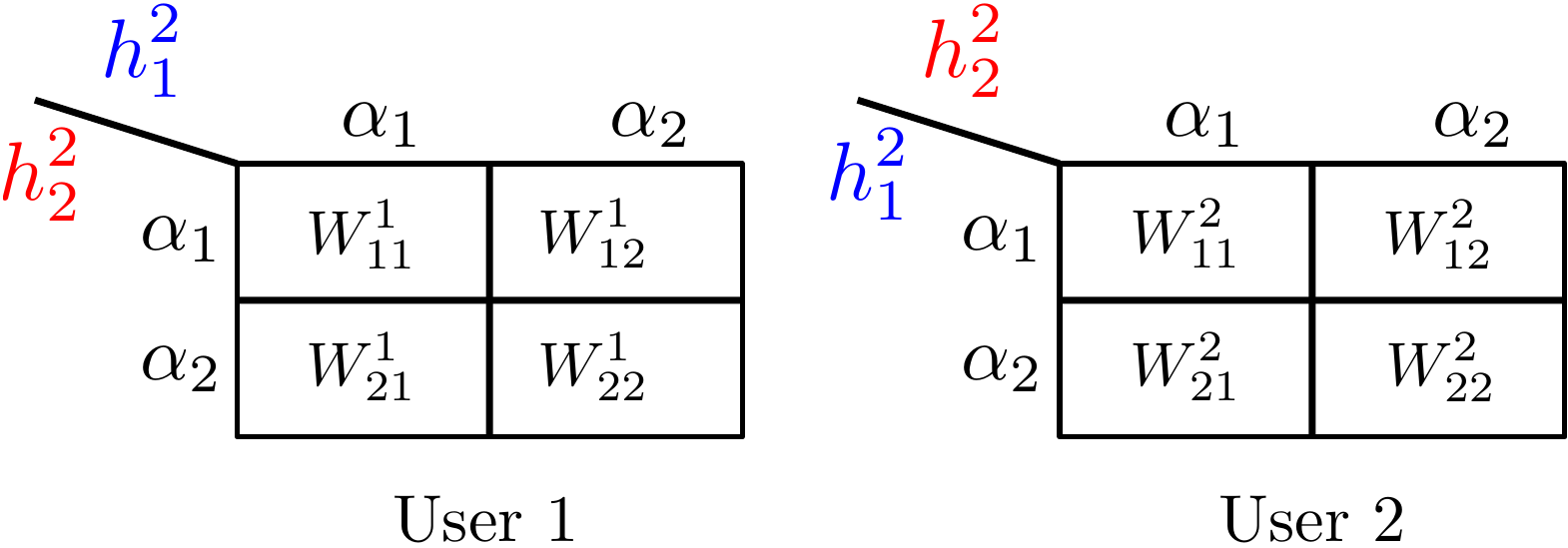}
\caption{Streaming and codebook assignments by user 1 and user 2.}
\label{fig:code} \vspace{-.1 in}
\end{figure}

\begin{table}[!b]
\renewcommand{\arraystretch}{1.3}
\caption{Successive decoding order of the streams adapted to the MAC}
\label{table:MAC}
\centering
\begin{tabular}{ |c||c|c|c|c| } 
 \hline
 $(h^2_1,h^2_2)$ & stage 1 & stage 2 & stage 3\\ 
 \hline\hline
 $(\alpha_1,\alpha_1)$ & $W^1_{11},W^2_{11}$ & & \\ 
 \hline
 $(\alpha_2,\alpha_1)$ & $W^1_{11},W^2_{11}$ & $W^1_{12}, W^2_{21}$ & \\ 
 \hline
 $(\alpha_1,\alpha_2)$ & $W^1_{11},W^2_{11}$ & $W^1_{21}, W^2_{12}$ & \\ 
 \hline
 $(\alpha_2,\alpha_2)$ & $W^1_{11},W^2_{11}$ & $W^1_{12}, W^2_{12},W^1_{21}, W^2_{21}$ & $W^1_{22},W^2_{22}$\\ 
 \hline
\end{tabular}
\end{table}

The orders of successive decoding for different combinations of channel realizations are presented in Table~\ref{table:MAC}.  Based on this successive decoding order, channel gain state $(\alpha_1, \alpha_1)$ is degraded with respect to all other states (i.e., the capacity region of the MAC corresponding to receiver $Y_{11}$ is strictly smaller than those of the other three receivers), while $(\alpha_1, \alpha_2)$ and $(\alpha_2,\alpha_1)$ are degraded with respect to $(\alpha_2,\alpha_2)$.  Clearly, the codebook assignment and successive decoding approach presented in Table~\ref{table:MAC} subsumes the one proposed in~\cite{Minero:ISIT07}, as presented in Table~\ref{table:tse}. In particular, Table~\ref{table:tse} can be recovered as a special case of Table~\ref{table:MAC} by setting the rates of the streams $\{W^1_{21},W^2_{21},W^1_{22},W^2_{22}\}$ to zero.  This implies that the proposed strategy should perform no worse than the one described in Table~\ref{table:tse}. This codebook assignment and decoding order gives rise to the equivalent broadcast network with two inputs $\{X_1,X_2\}$ and four outputs $\{Y_{11},Y_{12},Y_{21},Y_{22}\}$, as depicted in Fig.~\ref{fig:network2}.
\begin{remark}
Adopting the proposed broadcast approach transforms the original two-user MAC to a multi-terminal network consisting of two transmitters and four receivers, where each receiver is designated to decode a pre-specified set of codebooks. The resulting multi-terminal network model is {\em different} from that of~\cite{Minero:ISIT07}, and is expected to have a different capacity region. In the following two subsections, we provide an achievable rate region and an outer bound on the capacity of the network depicted in Fig.~\ref{fig:network2}. 
\end{remark}

\begin{remark}
For adopting the notion of  {\em broadcast approach} to the settings beyond the single-user single-antenna settings, the key element to be borrowed and generalized is the concept of degradedness, which allows for ordering the channels based on their qualities. In multi-user settings, this notion is not always as well-defined as in the single-user single-antenna case, and often involves heuristic ways of ordering the channels. In the approach proposed in this subsection, we use the capacity region of the multiple access channels formed from the transmitters to each of the four possible receivers, where it can be readily verified that as for each of these four multiple access channels, the capacity region expands as one of the channels becomes stronger.
\end{remark}

\begin{figure}[t]
\centering
\includegraphics[width=4in]{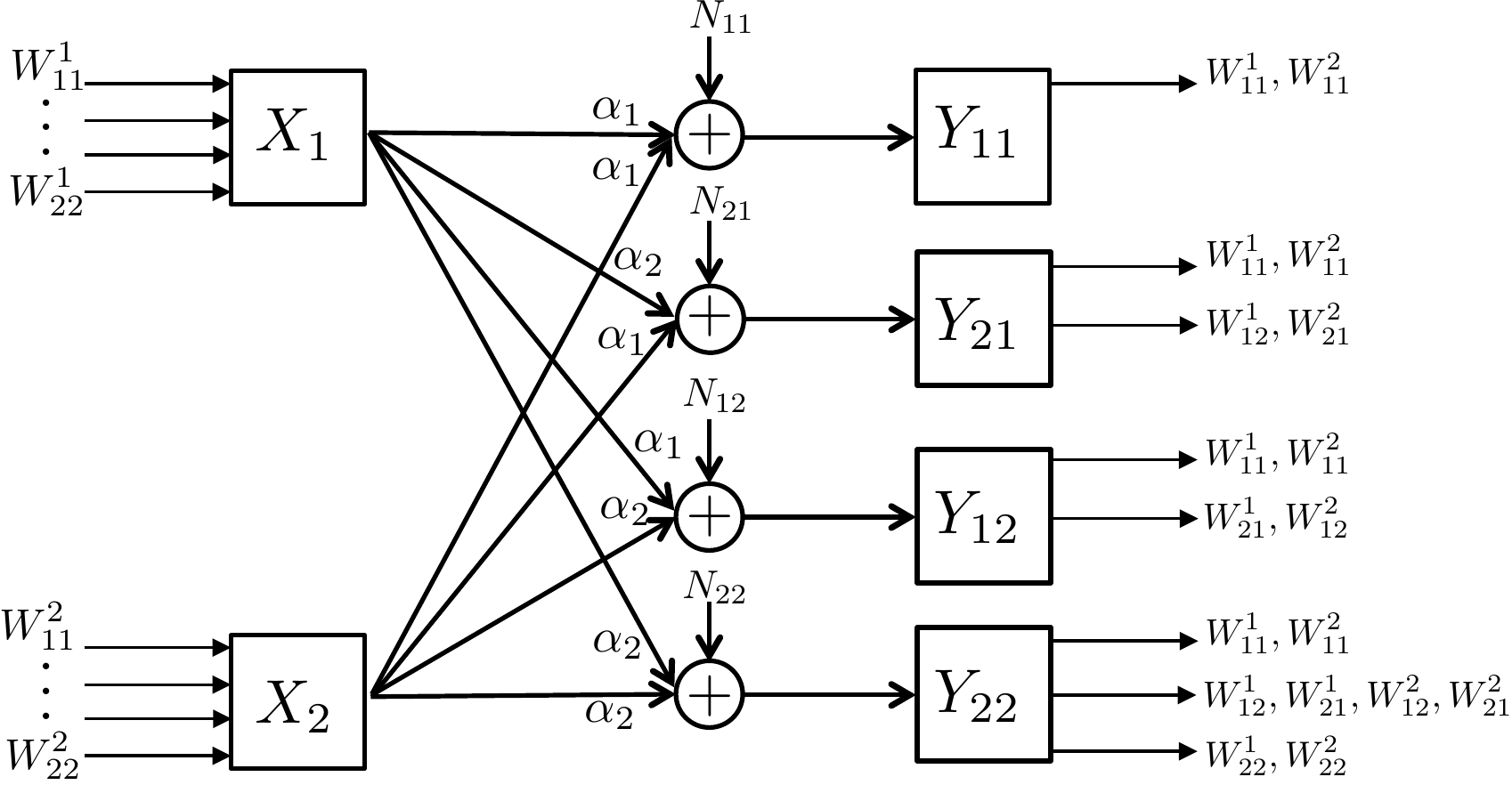}
\caption{Equivalent network with two inputs and four outputs.}
\label{fig:network2}
\end{figure}

\subsection{Achievable Rate Region}
This subsection delineates the region of all achievable rates $R^i_{uv}$ for  $i,u,v\in\{1,2\}$, where $R^i_{uv}$ accounts for the rate of codebook $W^i_{uv}$. We define $\beta^i_{uv}\in[0,1]$ as the fraction of the power that transmitter $i$ allocates to streams $W^i_{uv}$ for $u\in\{1,2\}$ and $v\in\{1,2\}$, where we clearly have $\sum_{u=1}^2\sum_{v=1}^2\beta^i_{uv}=1$. For the convenience in notations, and in order to place the emphasis on the interplay among the rates of different information streams, we consider the case that relevant streams in different users have identical rates, i.e., rates of information streams $W^1_{uv}$ and $W^2_{uv}$, denoted by $R^1_{uv}$ and $R^2_{uv}$ respectively, are the same, and denoted by $R_{uv}$, i.e., $R_{uv}\dff R^1_{uv}=R^2_{uv}$.  The results can be readily generalized to arbitrarily different rates for different streams.

\begin{theorem}[Achievable Region]
\label{theorem:achievable_rate2}
The achievable rate region of the rates $(R_{11}, R_{12}, R_{21}, R_{22})$ for the channel depicted in Fig.~\ref{fig:network2} is the set of all rates satisfying:
\begin{align}
\label{R:11} R_{11} \; & \; \leq \; r_{11}\\
\label{R:12} R_{12} \; & \; \leq \; r_{12} \\
\label{R:21} R_{21} \; & \; \leq \;  r_{21}\\
\label{R:1221} R_{12}+R_{21} \; &\leq  \;  r_1 \\
\label{R:21221} 2R_{12}+R_{21} \; & \leq \;  r_{12}'\\
\label{R:12221} R_{12}+2R_{21} \; & \;  \leq \; r_{21}'\\
\label{R:22} R_{22} \; & \; \leq r_{22}\ ,
 \end{align}
 where $\{r_{11},r_{12},r_{21},r_{22},r_1,r_{12}',r_{21}'\}$ are defined in Appendix~\ref{appendix:a}, over all possible power allocation factors  $\beta^i_{uv}\in[0,1]$ such that $\Sigma_{u=1}^2\Sigma_{v=1}^2\beta^i_{uv}=1$.
\end{theorem}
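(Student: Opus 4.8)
The plan is to establish achievability by a superposition-coding and successive-decoding scheme, after which the statement reduces to tabulating the interference seen at each decoding stage and simplifying the resulting multiple-access constraints under the symmetric-rate convention. First I would, for each stream $W^i_{uv}$, draw an independent Gaussian codebook of average power $\beta^i_{uv}P$ and let transmitter $i$ transmit the superposition $X_i=\sum_{u,v}S^i_{uv}$, so that the power budget $\sum_{u,v}\beta^i_{uv}=1$ is respected. Substituting into \eqref{eq:model2}, the output $Y_{mn}$ receives stream $W^1_{uv}$ with power $\alpha_m\beta^1_{uv}P$ and stream $W^2_{uv}$ with power $\alpha_n\beta^2_{uv}P$, and decoding at $Y_{mn}$ proceeds through the stages prescribed by Table~\ref{table:MAC}.

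The heart of the argument is a per-stage Gaussian multiple-access analysis. At any stage of receiver $Y_{mn}$, the streams decoded in earlier stages are taken to be known and cancelled, the streams assigned to later stages (or never decoded at $Y_{mn}$) are treated as Gaussian interference and folded into an effective noise variance $\sigma^2$, and the streams of the current stage form a Gaussian MAC. For each such sub-MAC I would invoke the standard achievability of the Gaussian MAC region, namely that for every subset $S$ of its streams $\sum_{k\in S}R_k\le\frac12\log_2\!\bigl(1+\textstyle\sum_{k\in S}p_k/\sigma^2\bigr)$ with $p_k$ the received power of stream $k$; a union bound over the finitely many stages and the four receivers then shows the total error probability vanishes whenever every stage's rate vector lies strictly inside its region. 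The most intricate stage is stage~2 of the strong--strong receiver $Y_{22}$, where the four streams $\{W^1_{12},W^2_{12},W^1_{21},W^2_{21}\}$ are decoded jointly against the residual interference of $\{W^1_{22},W^2_{22}\}$.

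It then remains to intersect these polytopes and rewrite them in the $C(\cdot,\cdot)$ notation to recover the quantities $\{r_{11},r_{12},r_{21},r_{22},r_1,r'_{12},r'_{21}\}$ of Appendix~\ref{appendix:a}. Under the convention $R_{uv}=R^1_{uv}=R^2_{uv}$, every subset constraint depends only on how many ``$12$'' and how many ``$21$'' streams the subset contains, so the $2^4-1$ constraints at $Y_{22}$ collapse to the distinct forms $R_{12}$, $R_{21}$, $2R_{12}$, $2R_{21}$, $R_{12}+R_{21}$, $2R_{12}+R_{21}$, $R_{12}+2R_{21}$, and $2R_{12}+2R_{21}$. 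Using the degradedness ordering noted after Table~\ref{table:MAC}, the baseline constraint $R_{11}\le r_{11}$ is fixed by the weakest receiver(s) that decode $W^i_{11}$ in stage~1, and $R_{22}\le r_{22}$ by the interference-free final stage at $Y_{22}$, while $r_{12}$, $r_{21}$, and $r_1$ arise as minima over the receivers that decode the corresponding stream or pair.

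The step I expect to be the main obstacle is proving that the intersection is captured \emph{exactly} by the seven listed inequalities, i.e., that the constraints $2R_{12}$, $2R_{21}$, $2R_{12}+2R_{21}$ and the duplicated singleton and pairwise constraints across the receivers are redundant once the $r$-quantities are defined appropriately. The key is the concavity of $x\mapsto\frac12\log_2(1+x)$: under equal per-stream rates, a $k$-fold sum constraint $kR\le\frac12\log_2(1+kp/\sigma^2)$ in which $k$ streams of common rate $R$ and common received power $p$ are summed is strictly tighter per unit rate than its lower-order counterparts, so the two-stream ``$12$'' sum constraint at $Y_{22}$ absorbs the corresponding singleton there and, after dividing by two, contributes to $r_{12}$, while the $2R_{12}+2R_{21}$ constraint likewise contributes to $r_1$; the genuinely new constraints $2R_{12}+R_{21}\le r'_{12}$ and $R_{12}+2R_{21}\le r'_{21}$ survive because their rate vectors are not dominated. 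Verifying these dominance relations and confirming that the per-stream minima are attained at the receivers claimed is the delicate bookkeeping; everything else is routine SINR computation and substitution into the definitions of Appendix~\ref{appendix:a}.
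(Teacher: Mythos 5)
Your proposal is correct and follows essentially the same route as the paper's proof: superposition coding with power fractions $\beta^i_{uv}$, successive decoding per Table~\ref{table:MAC} with undecoded streams treated as Gaussian noise, the standard Gaussian MAC subset bounds at each stage and receiver, and then a symmetry-based collapse in which concavity of $C(\cdot,\cdot)$ (e.g., $a_{19}=a_{21}\geq\tfrac{1}{2}a_{24}$ and $a_{23}=a_{25}=a_{26}=a_{28}\geq\tfrac{1}{2}a_{33}$) renders the singleton and lower-order constraints at $Y_{22}$ redundant, leaving exactly the seven inequalities of the theorem. The "delicate bookkeeping" you flag is precisely what the paper carries out explicitly via the constants $a_1$ through $a_{33}$.
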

\begin{proof}
The proof follows from the structure of the rate-splitting approach presented in Fig.~\ref{fig:code} and the decoding strategy presented in Table~\ref{table:MAC}. Detailed proof is provided in Appendix \ref{appendix:theorem:achievable_rate2}.
\end{proof}
In order to compare the achievable rate region in Theorem~\ref{theorem:achievable_rate2} and the capacity region presented in~\cite{Minero:ISIT07}, we group the information streams in the way that they are ordered and decoded in~\cite{Minero:ISIT07}. Specifically, the streams $\{W^1_{21},W^2_{21},W^1_{22},W^2_{22}\}$ are allocated zero power. Information streams $W_{11}^1$ and $W_{11}^2$ are adapted to the weak channels, and the information streams $W_{12}^2$ and $W_{12}^2$ are reserved to be decoded when one or both channels are strong. Information streams adapted to the {strong} channels are grouped and their rates are aggregated, and those adapted to the weak channels are also groups and their rates are aggregated. Based on this, the region presented in Theorem~\ref{theorem:achievable_rate2} can be used to form the sum-rates $R_w\dff (R^1_{11}+R^2_{11})$ and $R_s\dff (R^1_{12}+R^2_{12})$. 
 
\begin{corollary}\label{corollary:1}
By setting the power allocated to streams $\{W^1_{21},W^2_{21},W^1_{22},W^2_{22}\}$ to zero, the achievable rate region characterized by Theorem~\ref{theorem:achievable_rate2} reduces to the following region, which coincides with the capacity region characterized in \cite{Minero:ISIT07}.
\begin{align}
\label{eq:Rw} R_w\;&\; \leq \min\{a_3,a_6,a_9,a_4+a_8\}\ ,\\
\label{eq:Rs} \mbox{and}\quad R_s \;&\; \leq  C\left(  {\alpha_2 \beta^1_{12} +\alpha_2 \beta^2_{12} \; , \; 0} \right)\ ,
\end{align}
where $\{a_3,a_4,a_6,a_8,a_9\}$ are defined in Appendix~\ref{appendix:theorem:achievable_rate2}.
\end{corollary}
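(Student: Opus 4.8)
The plan is to show that the zero-power substitution $\beta^1_{21}=\beta^2_{21}=\beta^1_{22}=\beta^2_{22}=0$ collapses the seven constraints \eqref{R:11}--\eqref{R:22} of Theorem~\ref{theorem:achievable_rate2} onto the two-dimensional slice parameterized by $(R_{11},R_{12})$, and then to match this slice term-by-term with the capacity region of~\cite{Minero:ISIT07}. First I would observe that a stream carrying no power supports no positive rate: since $W^i_{21}$ and $W^i_{22}$ are allocated zero power, their signal contribution vanishes at every receiver, so the Appendix~\ref{appendix:theorem:achievable_rate2} quantities satisfy $r_{21}=r_{22}=0$ under this substitution. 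Consequently \eqref{R:21} and \eqref{R:22} force $R_{21}=R_{22}=0$. Using the symmetric-rate convention $R_{uv}\dff R^1_{uv}=R^2_{uv}$, this leaves only $R_{11}$ and $R_{12}$ active, with $R_w\dff R^1_{11}+R^2_{11}=2R_{11}$ and $R_s\dff R^1_{12}+R^2_{12}=2R_{12}$.

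Next I would propagate $R_{21}=R_{22}=0$ through the remaining inequalities. Constraints \eqref{R:1221}, \eqref{R:21221}, and \eqref{R:12221} all lose their $R_{21}$ terms and reduce to the scalar bounds $R_{12}\le r_1$, $2R_{12}\le r_{12}'$, and $R_{12}\le r_{21}'$, so that after the reduction the entire region is described by $R_{11}\le r_{11}$ together with $R_{12}\le\min\{r_{12},r_1,\tfrac12 r_{12}',r_{21}'\}$. I would then re-evaluate the appendix definitions of $r_{11},r_{12},r_1,r_{12}',r_{21}'$ at $\beta^i_{21}=\beta^i_{22}=0$. Because $W_{21}$ and $W_{22}$ no longer appear as residual interference at any receiver, the per-receiver MAC constraints feeding $r_{11}$ shed their $\alpha\beta_{21}$ and $\alpha\beta_{22}$ terms; after doubling, $2r_{11}$ collapses to the individual and sum-rate expressions $\{a_3,a_6,a_9,a_4+a_8\}$ of Appendix~\ref{appendix:theorem:achievable_rate2}, which establishes \eqref{eq:Rw}, namely $R_w=2R_{11}\le\min\{a_3,a_6,a_9,a_4+a_8\}$.

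For the strong layer, the key observation is that at receiver $Y_{22}$ the streams $W^1_{12}$ and $W^2_{12}$ are decoded in stage~2 of Table~\ref{table:MAC} with the stage-3 streams $W_{22}$ switched off, so they see no higher-layer interference and their joint sum-rate constraint is precisely $C(\alpha_2\beta^1_{12}+\alpha_2\beta^2_{12},0)$. Under the symmetric-rate convention this is exactly the subset-sum inequality feeding $r_{12}'$, and with $R_{21}=0$ it reduces to $R_s=2R_{12}\le C(\alpha_2\beta^1_{12}+\alpha_2\beta^2_{12},0)$, which is \eqref{eq:Rs}. The remaining and most delicate step, which I expect to be the main obstacle, is to verify that this single constraint dominates every other surviving bound on $R_{12}$: concretely, that $r_{12}$, $r_1$, and $r_{21}'$ are each no smaller than $\tfrac12\,C(\alpha_2\beta^1_{12}+\alpha_2\beta^2_{12},0)$ after the substitution, so that none of them is active. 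This is a purely algebraic verification built from the explicit $C(\cdot,\cdot)$ terms at $Y_{21}$, $Y_{12}$, and $Y_{22}$, but it must be carried out carefully because those terms differ only in which cross stream is treated as noise. Once the slice is shown to be $\{R_w\le\min\{a_3,a_6,a_9,a_4+a_8\},\ R_s\le C(\alpha_2\beta^1_{12}+\alpha_2\beta^2_{12},0)\}$, I would finish by matching it constraint-by-constraint to the capacity region of~\cite{Minero:ISIT07}, consistent with the decoding schedule of Table~\ref{table:tse} recovered from Table~\ref{table:MAC} under the same zero-power choice.
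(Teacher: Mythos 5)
Your high-level strategy---zero the powers of $\{W^1_{21},W^2_{21},W^1_{22},W^2_{22}\}$, identify the surviving constraints, and match them to the region of \cite{Minero:ISIT07}---is the same as the paper's, but the specific reduction you propose contains a step that does not just remain unverified; it fails. The paper's proof does not work from the symmetric constraints \eqref{R:11}--\eqref{R:22} at all: it returns to the per-channel-state, non-symmetric inequalities of Appendix~\ref{appendix:theorem:achievable_rate2}, keeps the two sum-rate inequalities \eqref{eq:R11_bound3_2} and \eqref{R'_1_12_2_12_2} as the description of the $(R_w,R_s)$ trade-off, and lists the individual-rate survivors \eqref{R'_1_12_2}--\eqref{R'_2_12_2} separately. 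Your route instead hinges on the ``delicate step'' you defer, namely that after the substitution $r_{12}$, $r_1$, and $r_{21}'$ all dominate $\tfrac12\,C(\alpha_2\beta^1_{12}+\alpha_2\beta^2_{12},0)$ so that the $(\alpha_2,\alpha_2)$-state sum constraint is the only active bound on $R_{12}$. This is false. Setting $\beta_{21}=\beta_{22}=0$ in \eqref{eq:r12}, the first term of $r_{12}$ becomes $a_{14}=C(\alpha_2\beta_{12},\alpha_1\beta_{12})$, the constraint from state $(\alpha_1,\alpha_2)$ in which $W^1_{12}$ is treated as noise; it saturates at $\tfrac12\log_2(1+\alpha_2/\alpha_1)$ as $P\to\infty$, while $\tfrac12 a_{24}=\tfrac14\log_2(1+2\alpha_2\beta_{12}P)$ grows without bound. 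For instance, $\alpha_1=0.25$, $\alpha_2=1$, $\beta_{12}=0.5$, $P=100$ gives $a_{14}\approx 1.12 < 1.67\approx\tfrac12 a_{24}$, and the same interference-limited term reappears inside $r_1$ (via $a_{15}$). Hence the slice you obtain carries the additional active constraint $R_s\le 2a_{14}$, and no purely algebraic verification of the kind you anticipate will make \eqref{eq:Rs} the sole surviving bound.

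To repair this along your route you would need to do what the paper implicitly does: restrict the identification of the $R_s$ bound to the $(\alpha_2,\alpha_2)$-state constraints \eqref{R'_1_12}--\eqref{R'_1_12_1_21_2_12_2_21}, and account for the state-$(\alpha_1,\alpha_2)$ and $(\alpha_2,\alpha_1)$ individual-rate constraints by observing that the scheme of Table~\ref{table:tse} decodes $W^i_2$ in those states while treating the other user's strong-layer stream as noise, so the same constraints appear on the \cite{Minero:ISIT07} side of the comparison and cancel rather than being dominated. Two smaller points: the corollary is stated with distinct $\beta^1_{uv},\beta^2_{uv}$ and $R_w=R^1_{11}+R^2_{11}$, so your identities $2r_{11}=\min\{a_3,a_6,a_9,a_4+a_8\}$ and $a_{19}=a_{21}\ge\tfrac12 a_{24}$ hold only under the symmetric power assumption $\beta^1_{uv}=\beta^2_{uv}$, which the paper's corollary proof does not impose; and your opening claim that $r_{21}=r_{22}=0$ forces $R_{21}=R_{22}=0$ is correct and consistent with the paper, so the weak-layer half of your argument is essentially sound.
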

\begin{proof}
See Appendix \ref{appendix:corollary:1}.  
\end{proof}

\subsection{Outer Bound}
In this subsection, we present an outer bound for the capacity region of the network in Fig.~\ref{fig:network2} for our proposed encoding and decoding strategy.
\begin{theorem}[Outer Bound]
\label{theorem:outer_bound2}
An outer bound for the capacity region of the rates $(R_{11}, R_{12}, R_{21}, R_{22})$ for the channel depicted in Fig.~\ref{fig:network2} is the set of all rates satisfying:
\begin{align}
R_{11} \; & \; \leq \; \frac{1}{2}a_{3}, \quad R_{12} \; \leq \; \frac{1}{2}a_{24}\ ,\quad R_{21}  \; \leq \;  \frac{1}{2}a_{27}, \quad R_{22} \; \leq r_{22}\ ,
\end{align}
where $r_{22}$ is defined in Appendix~\ref{appendix:a} and   constants $\{a_{24},a_{27}\}$ are defined in Appendix~\ref{appendix:theorem:achievable_rate2}.
\end{theorem}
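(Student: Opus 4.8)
The plan is to establish each of the four inequalities by a separate genie-aided converse, leveraging the fact that under the decoding schedule of Table~\ref{table:MAC} every stream is required to be decoded error-free at a specific receiver. First I would fix a length-$n$ code and invoke Fano's inequality at the relevant output: if a set $\mathcal{S}$ of messages is reliably decoded at receiver $Y_{mn}$, then $\sum_{W\in\mathcal{S}} nR_W \le I(\mathcal{S};Y_{mn}^n)+n\epsilon_n$ with $\epsilon_n\to 0$. The key device is that supplying the decoder with the remaining messages as genie side information can only enlarge this mutual information, so it yields a valid and analytically clean upper bound; evaluating the resulting single-letter mutual informations under the per-stream power split $\beta^i_{uv}$ and using that Gaussian inputs maximize them produces exactly the $C(\cdot,\cdot)$ expressions collected in the constants $a_3,a_{24},a_{27}$ and $r_{22}$.

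For the baseline rate, note that $\{W^1_{11},W^2_{11}\}$ must be decoded already at the weakest receiver $Y_{11}$, which is degraded with respect to the other three. Applying the two-user MAC sum-rate converse at $Y_{11}$, with the six higher-layer streams present, gives $R^1_{11}+R^2_{11}\le a_3$, and the symmetry assumption $R_{11}=R^1_{11}=R^2_{11}$ converts this to $R_{11}\le \tfrac12 a_3$. Symmetrically, the top layer $\{W^1_{22},W^2_{22}\}$ is decoded last at $Y_{22}$, by which stage every other stream has been stripped off; the residual clean two-user MAC sum-rate converse gives $2R_{22}\le$ its sum-rate capacity, i.e. $R_{22}\le r_{22}$, which coincides with the achievable expression because no outer relaxation is incurred in the final stage.

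For the two cross layers I would bound the symmetric pairs $\{W^1_{12},W^2_{12}\}$ and $\{W^1_{21},W^2_{21}\}$ at the strongest receiver $Y_{22}$, where both members of each pair appear in the second decoding stage. Revealing to $Y_{22}$ the baseline (already known from stage~1), the complementary cross pair, and the top layer as genie side information leaves a clean two-user MAC carrying only the targeted pair, whose signals arrive with powers $\alpha_2\beta^1_{12}P,\alpha_2\beta^2_{12}P$ (respectively $\alpha_2\beta^1_{21}P,\alpha_2\beta^2_{21}P$); its sum-rate converse yields $R^1_{12}+R^2_{12}\le a_{24}$ and $R^1_{21}+R^2_{21}\le a_{27}$, whence $R_{12}\le\tfrac12 a_{24}$ and $R_{21}\le\tfrac12 a_{27}$ after invoking symmetry. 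Observe that this genie deliberately discards the joint coupling among $\{W_{12},W_{21}\}$ that produced the aggregate constraints \eqref{R:1221}--\eqref{R:12221} in Theorem~\ref{theorem:achievable_rate2}; dropping these couplings only enlarges the region and therefore keeps the result a legitimate outer bound.

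The step I expect to be most delicate is the passage from the $n$-letter mutual informations to the single-letter Gaussian forms when the interfering (undecoded) streams are retained rather than revealed, most critically for $R_{11}$ at $Y_{11}$. There the additive interference need not be Gaussian, so the converse must either absorb it by an additional genie or appeal to the maximum-entropy property that, for a fixed covariance, Gaussian noise minimizes the decodable rate, ensuring the $C(\cdot,\cdot)$ expression is a genuine upper bound. Verifying that the chosen genie for each facet reproduces precisely the constants $a_3,a_{24},a_{27},r_{22}$ of the Appendix, and that no tighter single-receiver condition is inadvertently required, is the remaining bookkeeping.
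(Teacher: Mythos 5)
Your skeleton — Fano's inequality at the receiver designated to decode each layer, plus genie side information and a Gaussian maximum-entropy evaluation — is the same device the paper uses, and your treatment of the top layer ($R_{22}\le r_{22}$ from the clean two-user MAC at $Y_{22}$) is essentially the paper's argument. But there are two concrete gaps. First, your genie for the cross layers is mis-specified: to obtain $a_{24}=C\bigl(\alpha_2\beta^1_{12}+\alpha_2\beta^2_{12}\,,\,\alpha_2\beta^1_{22}+\alpha_2\beta^2_{22}\bigr)$ the top-layer codewords must be \emph{retained as residual interference}, not revealed. The paper conditions on $\mathcal{X}^{n}\backslash\{X^{1n}_{12},X^{2n}_{12},X^{1n}_{22},X^{2n}_{22}\}$, i.e.\ it withholds both the target pair and the $22$ pair. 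Revealing the top layer as you propose gives only $R^1_{12}+R^2_{12}\le C(2\alpha_2\beta_{12},0)$, which is strictly looser and does not reproduce the theorem's constant; the same defect afflicts your $a_{27}$ facet.

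Second, and more fundamentally, in a converse the power fractions $\beta^i_{uv}$ are not given — the code need not superimpose independent streams with a prescribed split — so "evaluating the mutual informations under the per-stream power split" is an achievability-style step that is not available here. The paper's essential move is to \emph{define} $\beta_{22}$ implicitly by the intermediate-value theorem from the entropy $\sum_i h(\sqrt{\alpha_2}X^1_{22,i}+\sqrt{\alpha_2}X^2_{22,i}+N_{22})$, and then to reuse that same $\beta_{22}$ (and similarly defined $\beta_{12},\beta_{21},\beta_{11}$) consistently across all four facets, which is what places $a_3,a_{24},a_{27},r_{22}$ on a common power split. Relatedly, your "most delicate step" is real but your proposed fix points the wrong way: for the $R_{11}\le\tfrac12 a_3$ facet one must \emph{lower}-bound the subtracted term $h(Y^n_{11}\mid X^{1n}_{11},X^{2n}_{11})$, which contains non-Gaussian interference, and the maximum-entropy property of the Gaussian only provides upper bounds. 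The paper resolves this via degradedness: it writes $Y_{11}$ as statistically equivalent to $Y_{22}$ plus independent noise of variance $\alpha_2/\alpha_1-1$, then lower-bounds the conditional entropy by further conditioning down to the $22$-layer-plus-noise term whose entropy was already pinned by the definition of $\beta_{22}$. Without that chain (or an entropy-power-inequality substitute), the $\tfrac12 a_3$ facet does not follow from your outline.
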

\begin{proof}
See Appendix \ref{app:outer}.
\end{proof}
Distance of this outer bound from the achievable rate region depends on the values of the channel parameters, i.e., channel coefficients, transmission power, and noise variance. 

\begin{figure}[t]
\centering
\includegraphics[width=3.3in]{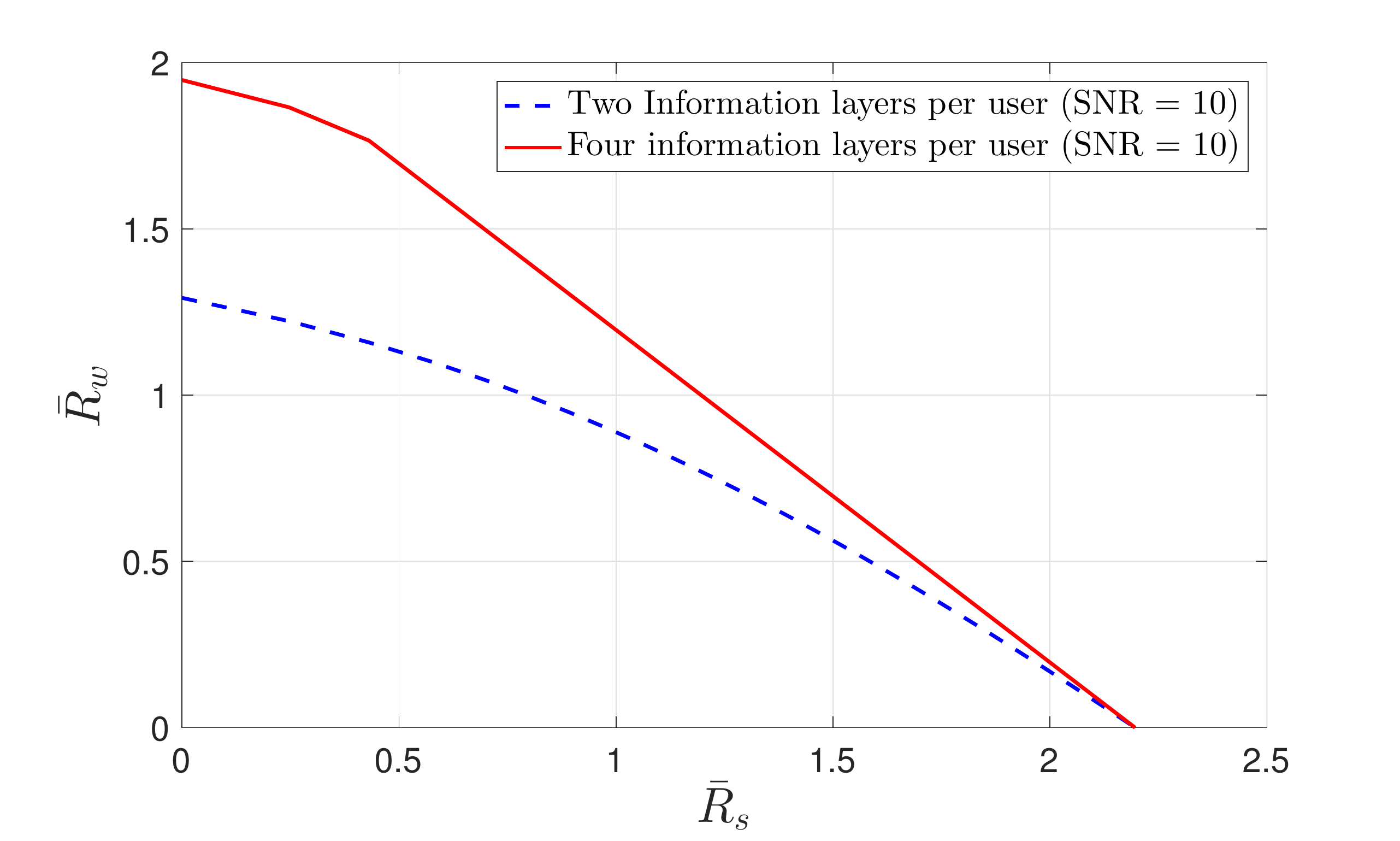}
\caption{Comparison of the capacity region presented in~\cite{Minero:ISIT07} and achievable rate region in Theorem~\ref{theorem:achievable_rate2} demonstrating the trade-off between $R_s$ and $R_w$, and $\bar R_s$ and $\bar R_w$. Here, transmission signal-to-noise ratio (SNR) is 10, the channel
coefficients are $(\sqrt{\alpha_1}, \sqrt{\alpha_2}) = (0.5, 1)$.}
\label{fig:region_snr_10}
\end{figure}

\subsection{Numerical Evaluations}
\label{subsec:Numerical}
First, we assess and compare the achievable rate region for the proposed approach in this paper (Theorem~\ref{theorem:achievable_rate2}) with the region provided by Corollary~\ref{corollary:1} and \cite{Minero:ISIT07} in Fig.~\ref{fig:region_snr_10}. Since the latter ones evaluate the trade-off between the sum-rates of the information streams adapted to the weak and strong channels, we provide the comparison as the same trade-off. For this purpose, corresponding to the coding scheme of \cite{Minero:ISIT07} (Table~\ref{table:tse}) we have earlier defined the sum-rates
\begin{align}
\label{eq:Rsw2}
R_w = R^1_{11}+R^2_{11} \ ,\quad  \mbox{and}\quad R_s  = R^1_{12}+R^2_{12}\ ,
\end{align}
and for the coding scheme proposed in this paper (Table~\ref{table:MAC}) we define
\begin{align}
\label{eq:Rsw4}
\bar R_w \dff R^1_{11}+R^2_{11}+R_{21}^1+R_{21}^2+R_{12}^1+R_{12}^2\ , \quad  \mbox{and}\quad \bar R_s  \dff R^1_{22}+R^2_{22}\ .
\end{align}
Figure~\ref{fig:region_snr_10} demonstrates the regions described by $(R_w,R_s)$ and $(\bar R_w,\bar R_s)$, in which the transmission signal-to-noise ratio (SNR) is 10, the channel coefficients are $(\sqrt{\alpha_1},\sqrt{\alpha_2})=(0.5,1)$, and the regions are optimized over all possible power allocation ratios.  The  numerical evaluation in Fig.~\ref{fig:region_snr_10} confirms that the achievable rate region in Theorem~\ref{theorem:achievable_rate2} dominates that of Corollary~\ref{corollary:1}, and the gap between two regions diminishes as the rates of the information layers adapted to the strong channels increases, i.e., $R_s$ and $\bar R_s$ increase.

\begin{figure}[t]
\centering
\begin{minipage}{.5\textwidth}
  \centering
  \includegraphics[width=3.3 in]{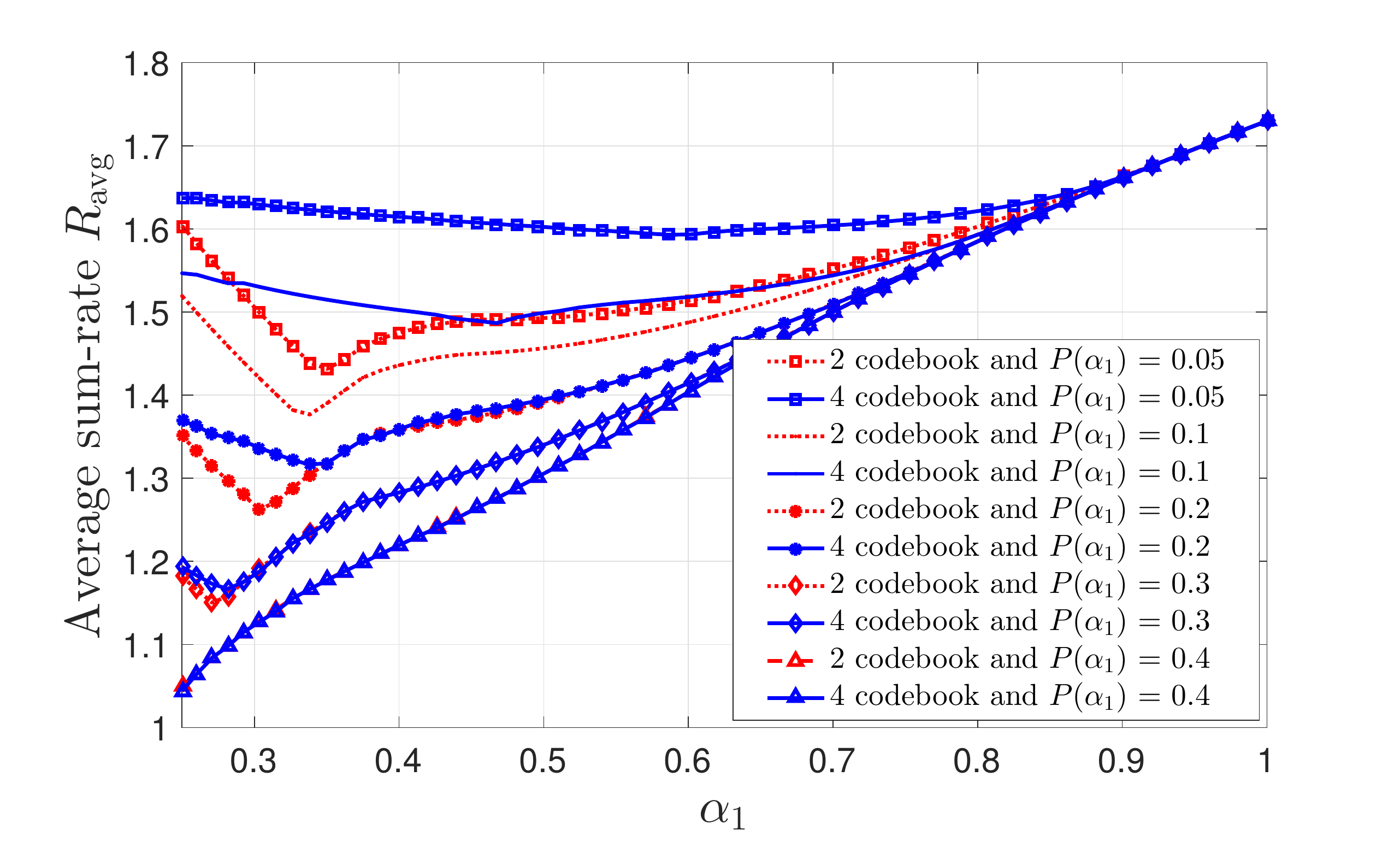}
  \caption{Average sum-rate versus $\alpha_1$ for different values of $p$ ($\alpha_2=1$ and SNR=5).}
  \label{fig:rate_alpha}
\end{minipage}%
\begin{minipage}{.5\textwidth}
  \centering
  \includegraphics[width=3.3 in]{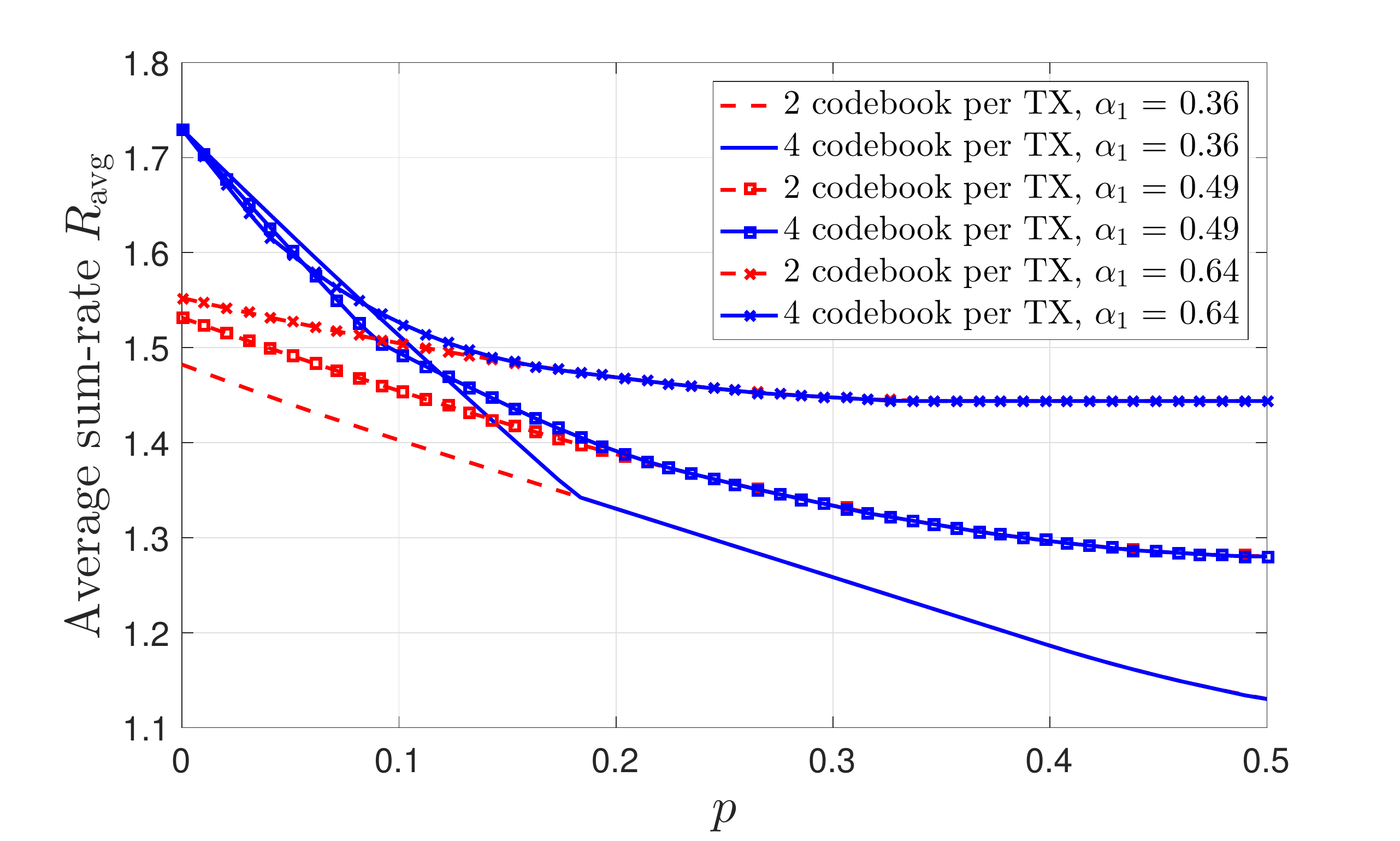}
  \caption{Average sum-rate versus $p$ for different values of $\alpha_1$ ($\alpha_2=1$ and SNR=5).}
  \label{fig:rate_prob}
\end{minipage}
\end{figure}

Next, we evaluate the average rate as a relevant and proper measure for characterizing the performance of the proposed  approach. The average rate is achievable with sufficient number of transmission cycles, where each cycle undergoes an independent fading realization. We consider a symmetric model, in which the corresponding information streams are allocated identical power, and have the same rate, and set $R_{uv} \dff  R^1_{uv}=R^2_{uv}$ for $u,v\in\{1,2\}$. Also, we consider a symmetric distribution for $h_1$ and $h_2$ such that  $\mathbb{P}(h_1^2=\alpha_i)=\mathbb{P}(h_2^2=\alpha_i)$ for $i\in\{1,2\}$, and define $p\dff \mathbb{P}(h_1^2=\alpha_1)=\mathbb{P}(h_2^2=\alpha_1)$. By leveraging the stochastic model of the fading process, the average rate is
\begin{align}\label{eq:Rave}
R_{\rm avg} \dff 2[R_{11} + (1-p)(R_{12}+R_{21})+(1-p)^2R_{22}]\ .
\end{align}
Based on the average rate in \eqref{eq:Rave}, we present the average rate of the proposed approach and compare it with that yielded by the approach of~\cite{Minero:ISIT07}. For the numerical evaluations we consider a two-state channel in which we fix the strong channel by setting $\alpha_2=1$ and let the weak channel $\alpha_1$ vary between 0 and 1. In all settings, we assume that the SNR is 5 dB.  Specifically, the results in Fig.~\ref{fig:rate_alpha} depict the variations of the maximum average rate versus $\alpha_1\in[0.25,1]$ for different choices of the probability $p$.  Based on these notations, Fig.~\ref{fig:rate_alpha} depicts the variations of the maximum average rate versus $\alpha_1$ and for different values of $p$. It is observed that for a wide range of $\alpha_1$ the proposed approach shows considerable gains, and as $p$ (i.e., the probability of encountering a weak channel) decreases, the performance gaps becomes even more significant. Small values of $p$, essentially, capture the settings in which both channels have similar qualities with a high probability. In Fig.~\ref{fig:rate_alpha}, as $\alpha_1$ increases, the average rate initially  decreases, and after reaching its minimum the trend is reversed. This minimum point moves towards the lower values of $\alpha_1$ as  $p$, i.e., the likelihood of encountering a weak channel, increases. The reason underlying this trend is that under interference, the overall quality (e.g., sum-rate) depends on the relative strengths of the direct and interfering links, rather than their absolute values. Since each of the four receivers decodes a number of codebooks from each transmitter, and treat the rest as Gaussian noise, by changing $\alpha_1$ on the one hand the codebooks to be decoded from transmitter 1 enjoy a higher quality channel, and on the other hand, all the remaining codebooks from the same transmitter impose higher interference. Hence, overall, by monotonically changing $\alpha_1$, we cannot expect to observe a monotonic change in the sum-rate. 

\begin{figure}[t]
\centering
\includegraphics[width=3.3in]{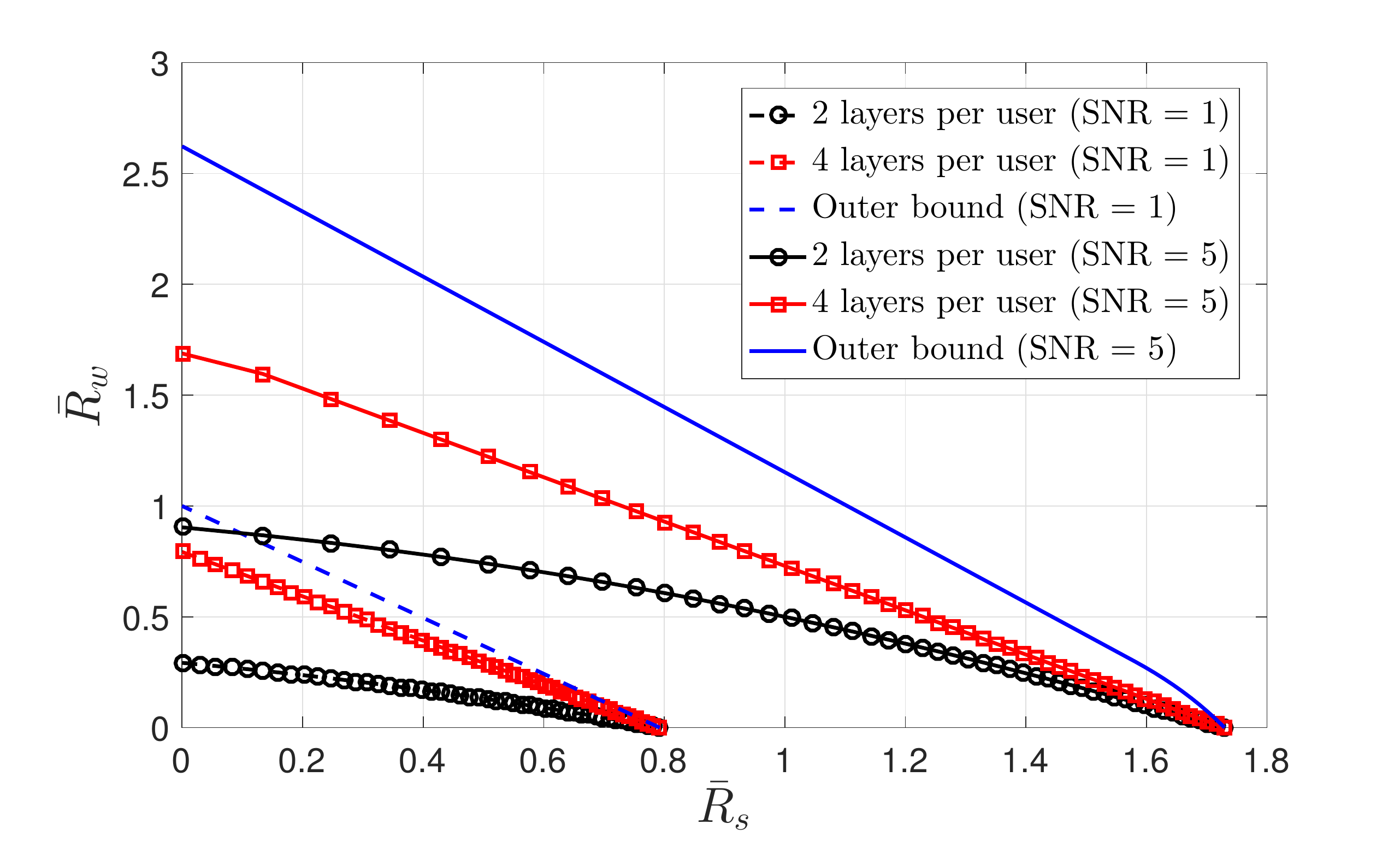}
\caption{Comparison of the capacity region of~\cite{Minero:ISIT07} and regions characterized by theorems~\ref{theorem:achievable_rate2} and~\ref{theorem:outer_bound2}.}
\label{fig:outer_bound2}
\end{figure}

Figure~\ref{fig:rate_prob} depicts the variations of the average sum-rate versus $p$ and for different values of $\alpha_1$. The observations from this figure also confirm that higher gain levels are exhibited as $p$ decreases.
It is noteworthy that the results from Fig.~\ref{fig:region_snr_10} validates the observations from Fig.~\ref{fig:rate_alpha} and Fig.~\ref{fig:rate_prob} that improvement in average rate is significant when probability of encountering weak channel state is low since the rate distribution considered in the achievable rate region comparison will correspond to average rate if probability of observing $\alpha_1$ is zero.

Finally, we assess the relative proximity of the outer bound defined in Theorem~\ref{theorem:outer_bound2} to the achievable rate region presented in Theorem~\ref{theorem:achievable_rate2} for two different levels of SNR. Figure~\ref{fig:outer_bound2} depicts the variations of $\bar R_w$ versus $\bar R_s$ for SNR values   1 and 5, and the choice of $(\sqrt{\alpha_1},\sqrt{\alpha_2}) = (0.5,1)$. Corresponding to each SNR, the figure illustrates the capacity region obtained by the approach of~\cite{Minero:ISIT07}, as well as the achievable rate region and the outer bound of the proposed approach in this paper.  

%

\section{Multi-state Channels ($\ell\geq 2$) }
\label{sec:finite}
\vspace{.1 in}
\subsection{Codebook Assignment and Decoding}
In this section, we extend the proposed codebook assignment and decoding strategy designed for the two-state channel to the general multiple-state channel with $\ell\in\mathbb{N}$ states. Similar to the two-state channel, we follow the principle of assigning codebooks based on combined network state, according to which a separate stream of information is designated to each combination of the individual channel states, which necessitates $\ell^2$ codebooks per user. Hence, for $i,j\in\{1,\dots,\ell\}$, the codebook assignment strategy for the users is summarized as follows.

Corresponding to the combined channel state $(h_1^2,h_2^2)=(\alpha_q,\alpha_p)$ we assign codebook $W^1_{pq}$ to User 1 and codebook $W^2_{qp}$ to User 2. By following the same line of analysis as in the two-state channel, the network state $(h_1^2,h_2^2)=(\alpha_1,\alpha_1)$ can be readily verified to be degraded with respect to states $(\alpha_1,\alpha_2)$, $(\alpha_2,\alpha_1)$, and $(\alpha_2,\alpha_2)$ when $\alpha_2>\alpha_1$. Additionally, channel combinations $(\alpha_1,\alpha_2)$ and $(\alpha_2,\alpha_1)$ are also degraded with respect to state $(\alpha_2,\alpha_2)$. When a particular user's channel becomes stronger while the interfering channel remains constant, the user affords to decode additional codebooks. Similarly, when a user's own channel remains constant while the interfering channel becomes stronger, again the user affords to decode additional information. This can be facilitated by decoding and removing the message of the interfering user, based on which the user experiences reduced interference. Based on these observations, for the multiple-state channels we order $h_1$ and $h_2$ in the ascending order and determine their relative degradedness by considering multiple two-state channels with $\alpha_1$ and $\alpha_2$ equal to any two adjacent realizations from the ordered values of $h_i$.  

This strategy is illustrated in Table~\ref{table_3}, in which different channel coefficients $h^2_1$ and $h^2_2$ are listed in the ascending orders. In this table $A_{p,q}$ denotes the cell in the $p^{\rm th}$ row and the $q^{\rm th}$ column, and it specifies the set of codebooks $\mathcal{U}_{pq}$ to be decoded by the combined channel state $(h_1^2,h_2^2)=(\alpha_q,\alpha_p)$. In this table, the set of codebooks to be decoded in each possible combined state is recursively related to the codebooks decoded in the weaker channels. Specifically, the state corresponding to $A_{p-1,q-1}$ is degraded with respect to states $A_{p,q-1}$ and $A_{p-1,q}$. Therefore, in the state $A_{p,q}$, the receiver decodes all streams from states $A_{p-1,q-1}$ (included in ${\cal U}_{p-1,q-1}$), $A_{p,q-1}$ (included in ${\cal U}_{p,q-1}$), and $A_{p-1,q}$ (included in ${\cal U}_{p-1,q}$), as well as one additional stream from each user, i.e., $W^1_{pq}$ and $W^2_{qp}$. 
When both channel coefficients have the highest possible values, all the streams from both users will be decoded at the receiver.

\setlength\extrarowheight{2pt}
\def\arraystretch{1.2}
\begin{table*}[t]
\caption{Successive decoding order for the $\ell$-state MAC.}
\label{table_3}
{\scriptsize
\hfill{}
\begin{tabular}{|c||c|c|c|c|c|c|}
\hline
\diagbox{\textcolor{red}{$h^2_2$}}{\textcolor{blue}{$h^2_1$}}& $\alpha_1$ & $\alpha_2$ & $\ldotp\ldotp$ &${\alpha_q}$ & $\ldotp\ldotp$ & $\alpha_\ell$\\
\hline\hline
$\alpha_1$ 
&  \begin{tabular}[t]{@{}c@{}}\\ \textcolor{blue}{$W^1_{11}$} , \textcolor{red}{$W^2_{11}$}\end{tabular} 
& \begin{tabular}[t]{@{}c@{}}$\mathcal{U}_{11}$\\ \textcolor{blue}{$W_{12}^1$} , \textcolor{red}{$W_{21}^2$}\end{tabular}
& $\ldotp\ldotp$
& $\ldotp$
& $\ldotp\ldotp$
& \begin{tabular}[t]{@{}c@{}}$\mathcal{U}_{1(\ell-1)}$\\ \textcolor{blue}{$W_{1\ell}^1$} , \textcolor{red}{$W_{\ell 1}^2$}\end{tabular}\\
\hline
$\alpha_2$ 
& \begin{tabular}[t]{@{}c@{}}$\mathcal{U}_{11}$\\ \textcolor{blue}{$W_{21}^1$} , \textcolor{red}{$W_{12}^2$}\end{tabular}
& \begin{tabular}[t]{@{}c@{}}$\mathcal{U}_{11}$ , $\mathcal{U}_{12}$ , $ \mathcal{U}_{21} $\\\textcolor{blue}{$W_{22}^1$} , \textcolor{red}{$W_{22}^2$}\end{tabular} 
&$\ldotp\ldotp$
&$\ldotp$
&$\ldotp\ldotp$
& \begin{tabular}[t]{@{}c@{}}$\mathcal{U}_{1(\ell-1)} \; , \; \mathcal{U}_{2(\ell-1)} \; , \; \mathcal{U}_{1l}$\\ \textcolor{blue}{$W_{2l}^1$} , \textcolor{red}{$W_{l2}^2$}\end{tabular}\\ 
\hline 
$\ldotp$&$\ldotp$&$\ldotp$&$\ldotp\ldotp$&$\ldotp$ & $\ldotp\ldotp$&$\ldotp$\\
\hline
$\alpha_{p}$ 
& $\ldotp$
& $\ldotp$
& $\ldotp\ldotp$
& \begin{tabular}[t]{@{}c@{}}$\mathcal{U}_{(p-1)(q-1)} , \mathcal{U}_{p(q-1)} , \mathcal{U}_{(p-1)q},$\\ \textcolor{blue}{$W_{pq}^1$} , \textcolor{red}{$W_{qp}^2$}\end{tabular}
& $\ldotp\ldotp$
& $\ldotp$\\
\hline 
$\ldotp$&$\ldotp$&$\ldotp$&$\ldotp\ldotp$&$\ldotp$ &$\ldotp\ldotp$&$\ldotp$\\
\hline
$\alpha_{\ell}$ 
& \begin{tabular}[t]{@{}c@{}}$\mathcal{U}_{(\ell-1)1}$ \\ \textcolor{blue}{$W_{\ell 1}^1$} , \textcolor{red}{$W_{1 \ell}^2$}\end{tabular} 
& \begin{tabular}[t]{@{}c@{}}$\mathcal{U}_{(\ell-1)1},\mathcal{U}_{\ell 1},\mathcal{U}_{(\ell-1)2},$\\\textcolor{blue}{$W_{\ell 2}^1$},\textcolor{red}{$W_{2 \ell}^2$}\end{tabular} 
& $\ldotp\ldotp$
& $\ldotp$
& $\ldotp\ldotp$ 
& \begin{tabular}[t]{@{}c@{}}$\mathcal{U}_{(\ell-1)(\ell-1)}\; , \;\mathcal{U}_{\ell(\ell-1)}\; , \;\mathcal{U}_{(\ell-1)\ell}$\\ \textcolor{blue}{$W_{\ell\ell}^1$} , \textcolor{red}{$W_{\ell\ell}^2$}\end{tabular}\\ 
\hline
\end{tabular}}
\hfill{}
\end{table*}

\subsection{Achievable Rate Region}
In this section, we extend the achievable rate region characterized by Theorem~\ref{theorem:achievable_rate2} for the general multi-state channel. It can be verified that the region characterized by Theorem~\ref{theorem:achievable_rate2} is subsumed by this general rate region as formalized in Corollary~\ref{corollary:2} and shown in Appendix~\ref{app:cor:2}. Similarly to the two-state channel settings, we define $R^i_{uv}$ as the rate of codebook $W^i_{uv}$ for $i\in\{1,2\}$ and $u,v\in\{1,\dots,\ell\}$. We also define $\beta_{uv}\in[0,1]$ as the fraction of the power allocated to the codebook $W^i_{uv}$, where $\sum_{u=1}^\ell\sum_{v=1}^\ell\beta_{uv}=1$. Similarly to the two-state channel setting, for the convenience in notations and for emphasizing the interplay among the rates, we consider a symmetric case in which $R_{uv}\dff R^1_{uv}=R^2_{uv}$.
\begin{theorem}[Achievable Region]
\label{theorem:achievable_rate_finite}
A region of simultaneously achievable rates $$\{R_{uv}: u< v \;\; \mbox{and} \;\; u,v \in\{1,\dots,\ell\}\}$$ for an $\ell$-state two-user multiple access channel is characterized as the set of all rates satisfying:
\begin{align}
\label{R:uv} R_{uv} & \; \leq \; r^1_{uv} \dff \min \left\{b_1(u,v),b_2(u,v),\frac{b_3(u,v)}{2}\right \}\\
\label{R:vu} R_{vu} & \; \leq \;  r^2_{uv} \dff \min \left\{ b_4(u,v),\frac{b_5(u,v)}{2} \right \}\\
\label{R:{vv}} R_{uv}+R_{vu} & \; \leq \;  r^3_{uv} \dff \min \left\{b_6(u,v),b_7(u,v),\frac{b_8(u,v)}{2} \right \}\\
\label{R:5} 2R_{uv}+R_{vu} & \; \leq \; b_9(u,v)\\
\label{R:6} R_{uv}+2R_{vu} & \; \leq \; b_{10}(u,v)\\
\label{R_7} R_{uu} & \leq \min\left\{b_{11}(u),\frac{b_{12}(u)}{2}\right\}\ ,
\end{align}
where constants $\{b_i:i\in\{1,\dots,12\}\}$ are defined in Appendix~\ref{sec:app_b}.
\end{theorem}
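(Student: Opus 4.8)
The plan is to prove achievability by Gaussian superposition coding together with the successive-decoding schedule of Table~\ref{table_3}, and then to collapse the resulting large collection of multiple-access constraints into the compact per-pair form \eqref{R:uv}--\eqref{R_7} by invoking the degradedness ordering of the equivalent receivers. First I would fix the code. Each user $i$ generates $\ell^2$ mutually independent Gaussian codebooks, one per stream $W^i_{uv}$, with symbols drawn i.i.d.\ $\mathcal{N}(0,\beta_{uv}P)$, and transmits the superposition $X_i=\sum_{u,v}X^i_{uv}$; the constraint is met because $\sum_{u,v}\beta_{uv}=1$. At the receiver associated with the combined state $A_{p,q}$, i.e.\ $(h_1^2,h_2^2)=(\alpha_q,\alpha_p)$, I would run joint-typicality decoding stage by stage in the order of Table~\ref{table_3}: at each stage the streams decoded earlier (those in $\mathcal{U}_{(p-1)(q-1)}$, $\mathcal{U}_{p(q-1)}$, $\mathcal{U}_{(p-1)q}$) are stripped off, the two freshly introduced streams $W^1_{pq},W^2_{qp}$ are decoded, and every stream outside $\mathcal{U}_{pq}$ is treated as Gaussian noise.

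Second, I would carry out the error analysis stage by stage. For a fixed receiver and stage, the jointly decoded streams form a Gaussian multiple-access channel whose interference power, normalized by $P$, equals the sum of $\alpha_q\beta_{u'v'}$ over the undecoded user-$1$ streams plus the sum of $\alpha_p\beta_{ab}$ over the undecoded user-$2$ streams. The usual random-coding bound then yields one inequality per nonempty subset of the decoded streams, each of the form ``sum of subset rates $\le C(\text{sum of subset received powers},\,\text{interference})$.'' Because the two users share the common rate $R_{uv}$ at every index, these subset inequalities collapse into bounds on $R_{uv}$, $R_{vu}$, and their integer combinations; in particular, the diagonal pair $\{W^1_{uu},W^2_{uu}\}$, decoded first at $A_{u,u}$, produces the individual and two-stream bounds that give \eqref{R_7}.

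Third, and this is where the real work lies, I would show that the global constraint set reduces to the stated per-pair inequalities. The structural fact recorded before the statement---that $A_{p-1,q-1}$ is degraded with respect to $A_{p,q-1}$ and $A_{p-1,q}$---implies that each stream is decodable at every state dominating the one to which it is adapted, so I expect the binding inequalities for an index pair $(u,v)$ with $u<v$ to originate only at two states: $A_{u,v}$, where $\{W^1_{uv},W^2_{vu}\}$ are first decoded against the weak interference pattern (yielding $b_1,b_4,b_6$), and $A_{v,v}$, where all four streams $\{W^1_{uv},W^2_{uv},W^1_{vu},W^2_{vu}\}$ are eventually decoded against the strong pattern (yielding $b_2,b_3,b_5,b_7,b_8$ together with the weighted bounds $b_9,b_{10}$). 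The asymmetry between \eqref{R:uv} and \eqref{R:vu}---three terms versus two---should reflect that the gain seen by $W_{uv}$ is $\alpha_v$ at both $A_{u,v}$ and $A_{v,v}$, so its individual bound can be active at either state, whereas the gain seen by $W_{vu}$ strengthens from $\alpha_u$ to $\alpha_v$, leaving only the weaker state $A_{u,v}$ active. Every remaining inequality, coming from states that strictly dominate $A_{u,v}$ or $A_{v,v}$, should then be implied, since passing to a dominating receiver only enlarges its multiple-access region.

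The main obstacle is precisely this last reduction. Unlike the single pair treated in the two-state case, the interference seen by a fixed stream is \emph{not} monotone across states---a stronger state has fewer interferers but amplifies each of them---so the domination of the non-binding inequalities cannot be read from one signal-to-interference comparison and must be argued subset by subset, tracking exactly which streams lie outside $\mathcal{U}_{pq}$ at each state. Executing this bookkeeping for general, non-adjacent $(u,v)$---where the four streams of interest are introduced across several intermediate stages of the decoding at $A_{v,v}$---and matching each surviving inequality to the explicit constants of Appendix~\ref{sec:app_b} is the technical crux, with the specialization to $\ell=2$ recovering Theorem~\ref{theorem:achievable_rate2} as a consistency check.
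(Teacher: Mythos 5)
You should first note that the paper never actually writes out a proof of Theorem~\ref{theorem:achievable_rate_finite}: it only supplies the constants in Appendix~\ref{sec:app_b} and the $\ell=2$ consistency check of Corollary~\ref{corollary:2}. So the only thing to compare against is the intended generalization of the proof of Theorem~\ref{theorem:achievable_rate2}, and your skeleton matches it exactly: independent Gaussian codebooks $W^i_{uv}$ with power fractions $\beta_{uv}$, the staged successive decoding of Table~\ref{table_3} with undecoded streams treated as noise, one MAC subset-constraint family per receiver and stage, and a final reduction exploiting the symmetric-rate assumption $R^1_{uv}=R^2_{uv}$. Your identification of the bookkeeping of surviving constraints as the crux is also right, and your $\ell=2$ sanity check is precisely what the paper does.

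The genuine gap is in your proposed reduction. You conjecture that for a pair $(u,v)$ only the two states $A_{u,v}$ and $A_{v,v}$ produce binding constraints, but the theorem's own constants contradict this: $b_1(u,v)$ in \eqref{eq:b1} is a minimum over all $j\in J_1(u,v)=\{u,\dots,v-1\}$, i.e.\ over every state in which user 1's channel equals $\alpha_v$ while user 2's ranges over the weaker values; $b_6,b_9,b_{10}$ retain minima over the families $J_2$ and $J_3$ of \eqref{eq:J2}--\eqref{eq:J3}; and $b_2,b_4,b_7,b_{11}$ explicitly contain $\alpha_\ell$, because the worst individual-rate constraint for a stream received at gain $\alpha_u$ or $\alpha_v$ occurs when the \emph{interfering} channel sits at its strongest value $\alpha_\ell$, which for $v<\ell$ is a state dominating both $A_{u,v}$ and $A_{v,v}$. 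This is exactly the non-monotonicity you flag in your last paragraph — a stronger state leaves fewer undecoded streams but amplifies each residual one — but your plan resolves it the wrong way: rather than collapsing to two states, the correct argument must show that the constraints from the various dominating states cannot be mutually ordered and therefore survive as the explicit minima over $J_1,J_2,J_3$ appearing in \eqref{R:uv}--\eqref{R:6}. As written, your reduction would produce a strictly larger (and hence unproven) region than the one claimed; to land on the stated theorem you need to carry out the subset-by-subset comparison you defer, and show that everything \emph{outside} those index families is dominated while everything inside is not.
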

\begin{corollary}\label{corollary:2}
By setting $\ell=2$, the achievable rate region characterized by Theorem~\ref{theorem:achievable_rate_finite} reduces to the region characterized by Theorem~\ref{theorem:achievable_rate2}.
\end{corollary}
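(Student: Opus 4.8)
The plan is to treat Corollary~\ref{corollary:2} as a pure specialization: substitute $\ell=2$ into the statement of Theorem~\ref{theorem:achievable_rate_finite} and verify that the resulting inequalities coincide, one by one, with \eqref{R:11}--\eqref{R:22} of Theorem~\ref{theorem:achievable_rate2}. First I would observe that at $\ell=2$ the index set $\{(u,v):u<v,\ u,v\in\{1,\dots,\ell\}\}$ contains only the pair $(u,v)=(1,2)$, while the diagonal index $u$ in \eqref{R_7} ranges over $\{1,2\}$. Consequently the constraints of Theorem~\ref{theorem:achievable_rate_finite} reduce to exactly seven inequalities: $R_{12}\le r^1_{12}$, $R_{21}\le r^2_{12}$, $R_{12}+R_{21}\le r^3_{12}$, $2R_{12}+R_{21}\le b_9(1,2)$, $R_{12}+2R_{21}\le b_{10}(1,2)$, $R_{11}\le\min\{b_{11}(1),b_{12}(1)/2\}$, and $R_{22}\le\min\{b_{11}(2),b_{12}(2)/2\}$. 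The rates $R_{11},R_{12},R_{21},R_{22}$ appearing here are precisely the four rates constrained in Theorem~\ref{theorem:achievable_rate2}, so it remains only to match the right-hand sides.

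The core of the argument is establishing the correspondences $r^1_{12}=r_{12}$, $r^2_{12}=r_{21}$, $r^3_{12}=r_1$, $b_9(1,2)=r_{12}'$, $b_{10}(1,2)=r_{21}'$, $\min\{b_{11}(1),b_{12}(1)/2\}=r_{11}$, and $\min\{b_{11}(2),b_{12}(2)/2\}=r_{22}$, where the left sides are read off from Appendix~\ref{sec:app_b} and the right sides from Appendix~\ref{appendix:a}. To do this I would first unwind the recursively defined accumulated codebook sets $\mathcal{U}_{pq}$ of Table~\ref{table_3} at $\ell=2$: the recursion for cell $A_{p,q}$ aggregates $\mathcal{U}_{p-1,q-1}$, $\mathcal{U}_{p,q-1}$, $\mathcal{U}_{p-1,q}$ together with the fresh pair $W^1_{pq},W^2_{qp}$, and for $\ell=2$ this collapses exactly to the four-row decoding schedule of Table~\ref{table:MAC}. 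Each constant $b_i(1,2)$ is a sum of $C(\cdot,\cdot)$ terms whose ``signal'' arguments are the decoded stream powers and whose ``interference'' arguments are the powers of the as-yet-undecoded streams at the relevant receiver; substituting $\ell=2$ and the resulting $\mathcal{U}$ sets fixes these arguments, and I would then verify term-by-term that they reproduce the explicit $C(\cdot,\cdot)$ expressions defining $r_{12},r_{21},r_1,r_{12}',r_{21}',r_{11},r_{22}$.

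I expect the main obstacle to be bookkeeping rather than conceptual: the general-$\ell$ constants in Appendix~\ref{sec:app_b} are written as sums indexed over the degraded states feeding into $A_{p,q}$, whereas the two-state constants in Appendix~\ref{appendix:a} are given in closed form, so the crux is confirming that the interference power accumulated as noise at each of the four receivers $Y_{11},Y_{12},Y_{21},Y_{22}$ matches between the two formulations. In particular I would have to align the power-splitting variables, since the general region uses the shared factors $\beta_{uv}$, which at $\ell=2$ must be identified with the per-user factors $\beta^i_{uv}$ under the symmetric convention $R_{uv}\dff R^1_{uv}=R^2_{uv}$; and I would have to check that the $\min\{\cdot\}$ groupings in $r^1_{12},r^2_{12},r^3_{12}$ (namely $b_1,\dots,b_8$ with their factor-of-$2$ sum constraints) reproduce the per-user-rate and sum-rate MAC constraints at each receiver. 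Once the $\mathcal{U}$ sets and interference terms are matched, the seven equalities between the $b_i(1,2)$ and the $r$-quantities follow by direct inspection, which together with the index collapse of the first paragraph completes the reduction.
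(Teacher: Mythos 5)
Your proposal is correct and takes essentially the same route as the paper's own proof: the paper likewise specializes to $\ell=2$, notes the index sets collapse to $(u,v)=(1,2)$ plus the diagonal $u\in\{1,2\}$, and then verifies term-by-term that $b_{11}(1)=a_4$, $b_{12}(1)=a_3$, $b_1(1,2)=a_{14}$, $b_3(1,2)=a_{24}$, etc., dropping the redundant entries of each $\min$ (e.g.\ $b_2(1,2)\ge \tfrac{1}{2}b_3(1,2)$ and $b_7(1,2)\ge\tfrac{1}{2}b_8(1,2)$) to recover $r_{11},r_{12},r_{21},r_1,r_{12}',r_{21}',r_{22}$. Your plan identifies exactly the right correspondences and the right bookkeeping obstacles; carrying out the substitutions as you describe reproduces the paper's argument.
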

\begin{proof}
See Appendix \ref{app:cor:2}.  
\end{proof}

\section{Conclusions}
\label{sec:conclusions}

We have proposed a broadcast approach for multiple access communication over a slowly fading channel. While the receiver knows the instantaneous channel states, the states are assumed to be unknown to the transmitters. The existing broadcast approaches applied to multiple access communication, directly adopt the approach designed for the single-user channel in which information streams are adapted to the state of the single-user channel. In this paper, we have proposed an encoding strategy in which the information streams are adapted to the combined states of the channels, and have presented a successive decoding strategy for decoding as much information as possible at the receiver, based on the actual channel states. We have characterized the achievable rate region and an outer bound, and have shown that the achievable rate region subsumes the existing known regions in which the information streams are adapted to the single-user channels. 

\appendix

\section{Constants of Theorem~\ref{theorem:achievable_rate2}}
\label{appendix:a}
By defining $\bar\beta_{uv}\dff 1-\beta_{uv}$, the terms used for characterizing the achievable rate region specified in Theorem~\ref{theorem:achievable_rate2} are:
\begin{align}
\label{eq:r11} r_{11}\; & \; \dff \min \Big \{C\big(\alpha_1 \beta_{11},(\alpha_1+\alpha_2)\bar \beta_{11}\big)\; , \;  \frac{1}{2}\; C\big(2\alpha_1 \beta_{11}, 2\alpha_1\bar\beta_{11}\big)\Big\}\ ,\\
\label{eq:r12} r_{12}\; & \; \dff \min \Big \{C\big(\alpha_2 \beta_{12},\alpha_1(\beta_{12}+\beta_{22})+\alpha_2(\beta_{21}+\beta_{22}))\big)\; , \;  \frac{1}{2}\; C\big(2\alpha_2 \beta_{12}, 2\alpha_2\beta_{22}\big)\Big\}\ ,\\
\label{eq:r21} r_{21}\; & \; \dff \min \Big \{C\big(\alpha_1 \beta_{21},\alpha_1(\beta_{12}+\beta_{22})+\alpha_2(\beta_{21}+\beta_{22})\big)\; , \;  \frac{1}{2}\; C\big(2\alpha_2 \beta_{21}, 2\alpha_2\beta_{22}\big)\Big\}\ ,\\
\label{r1_p} r_{1} \; & \; \dff \min \Big \{C\big(\alpha_1\beta_{21}+\alpha_2 \beta_{12},\alpha_1(\beta_{12}+\beta_{22})+\alpha_2(\beta_{21}+\beta_{22})\big)\; , \;  \frac{1}{2}\; C\big(2\alpha_2 (\beta_{12}+\beta_{21}), 2\alpha_2\beta_{22}\big)\Big\}\ ,\\
\label{r12_p} r_{12}' \;  & \; \dff \; C\big ( \alpha_2(2 \beta_{12}+\beta_{21}) \; ,\; 2\alpha_2\beta_{22} \big)\ ,\\
\label{r21_p} r_{21}' \; & \; \dff  \; C\big (\alpha_2(\beta_{12}+2\beta_{21})\; ,\; 2\alpha_2\beta_{22}\big) \ ,\\ 
\label{r22} r_{22} \; & \; \dff \;  \frac{1}{2} C\big (2\alpha_2\beta_{22}\; , \; 0\big)\ .
\end{align}

\section{Proof of Theorem~\ref{theorem:achievable_rate2}}
\label{appendix:theorem:achievable_rate2}
\underline{\textbf{Information Streams $\{W^1_{11},W^2_{11}\}$}}:\\ 
In this section, we first prove that the successive decoding strategy outlined in Table~\ref{table:MAC} for the two-user MAC with two states per channel and no channel state information at the transmitter achieves the region specified in Theorem~\ref{theorem:achievable_rate2}.  Without knowing the CSI, each transmitter sends its message encoded in four separate streams, as specified in Fig.~\ref{fig:code}. At the receiver side, the receiver performs successive decoding by first decoding the streams $W^1_{11}$ and $W^2_{11}$, which are adapted to the {\em weak} channels, i.e., $(h^2_1,h^2_2)=(\alpha_1,\alpha_1)$. At this decoding stage, all other remaining streams are treated as noise. Under such a scheme, successful decoding of these two streams requires that their individual rates and sum rate are within a region characterized by $(R^1_{11},R^2_{11})$ and limited by a set of inequalities that form the boundaries of the capacity region of a two-user MAC. Specifically, streams $W^1_{11}$ and $W^2_{11}$ can be decoded successfully if their corresponding rates satisfy the following conditions under various possible channel state combinations.
\begin{itemize}
\item Channel state $(\alpha_1,\alpha_1)$:
\begin{align}
\label{h1h2:alpha1_alpha1_1}
R^1_{11}&\leq a_1\dff C\left(\alpha_1 \beta^1_{11}\;,\;\alpha_1(\bar\beta^1_{11}+\bar\beta^2_{11})\right)\ , \\
\label{h1h2:alpha1_alpha1_2}R^2_{11}& \leq  a_2\dff  C\left(\alpha_1 \beta^2_{11}\; ,\; \alpha_1 (\bar\beta^1_{11}+\bar\beta^2_{11})\right)\ , \\
\label{h1h2:alpha1_alpha1_3}R^1_{11}+R^2_{11} &\leq  a_3\dff  C\left(\alpha_1(\beta^1_{11}+\beta^2_{11})\; , \; \alpha_1 (\bar\beta^1_{11}+\bar\beta^2_{11})\right) \ ,
\end{align}
\item Channel state $(\alpha_1,\alpha_2)$:
\begin{align}
\label{h1h2:alpha1_alpha2_1}
R^1_{11}&\leq a_4\dff C\left(\alpha_1 \beta^1_{11}\;,\;\alpha_1\bar\beta^1_{11}+\alpha_2\bar\beta^2_{11}\right)\ , \\
\label{h1h2:alpha1_alpha2_2}R^2_{11}& \leq  a_5\dff  C\left(\alpha_2 \beta^2_{11}\; ,\; \alpha_1 \bar\beta^1_{11}+\alpha_2\bar\beta^2_{11}\right)\ , \\
\label{h1h2:alpha1_alpha2_3}R^1_{11}+R^2_{11} &\leq  a_6\dff  C\left(\alpha_1\beta^1_{11}+\alpha_2\beta^2_{11}\; , \; \alpha_1 \bar\beta^1_{11}+\alpha_2\bar\beta^2_{11}\right) \ .
\end{align}
\item Channel state $(\alpha_2,\alpha_1)$:
\begin{align}
\label{h1h2:alpha2_alpha1_1}  R^1_{11}&\leq a_7\dff C\left(\alpha_2 \beta^1_{11}\;,\;\alpha_2\bar\beta^1_{11}+\alpha_1\bar\beta^2_{11}\right)\ , \\
\label{h1h2:alpha2_alpha1_2} R^2_{11}& \leq  a_8\dff  C\left(\alpha_1 \beta^2_{11}\; ,\; \alpha_2 \bar\beta^1_{11}+\alpha_1\bar\beta^2_{11}\right)\ , \\
\label{h1h2:alpha2_alpha1_3} R^1_{11}+R^2_{11} &\leq  a_9\dff  C\left(\alpha_2\beta^1_{11}+\alpha_1\beta^2_{11}\; , \; \alpha_2 \bar\beta^1_{11}+\alpha_1\bar\beta^2_{11}\right) \ .
\end{align}
\item Channel state $(\alpha_2,\alpha_2)$:
\begin{align}
\label{h1h2:alpha2_alpha2_1}  R^1_{11}&\leq a_{10}\dff C\left(\alpha_2 \beta^1_{11}\;,\;\alpha_2(\bar\beta^1_{11}+\bar\beta^2_{11})\right)\ , \\
\label{h1h2:alpha2_alpha2_2} R^2_{11}& \leq  a_{11}\dff  C\left(\alpha_2 \beta^2_{11}\; ,\; \alpha_2 (\bar\beta^1_{11}+\bar\beta^2_{11})\right)\ , \\
\label{h1h2:alpha2_alpha2_3}R^1_{11}+R^2_{11} &\leq  a_{12}\dff  C\left(\alpha_2(\beta^1_{11}+\beta^2_{11})\; , \; \alpha_2( \bar\beta^1_{11}+\bar\beta^2_{11})\right) \ .
\end{align}
\end{itemize}
From the inequalities in \eqref{h1h2:alpha1_alpha1_1}-\eqref{h1h2:alpha2_alpha2_3}, by comparing all relevant bounds and invoking that $\alpha_1<\alpha_2$, we find the following bounds on $R^1_{11}$, $R^2_{11}$, and $R^1_{11}+R^2_{11}$.
\begin{align}
\label{eq:R11_bound1} R^1_{11}\;&\; \leq \min\{a_1,a_4,a_7,a_{10}\} = a_4\ ,\\
\label{eq:R11_bound2}  R^2_{11}\;&\; \leq \min\{a_2,a_5,a_8,a_{11}\} = a_8 \ ,\\
\label{eq:R11_bound3} R^1_{11}+R^2_{11}\;&\; \leq \min\{a_3,a_6,a_9,a_{12},a_4+a_8\} = \min\{a_3,a_6,a_9,a_4+a_8\}\ .
\end{align}
Furthermore, since the objective is to find the achievable rate region when corresponding streams of the two users have equal rates, and consequently, equal power, we set 
\begin{align}\label{eq:equal}
\beta_{uv}\dff \beta^1_{uv}=\beta^2_{uv} \quad\mbox{and}\quad R_{uv}\dff R^1_{uv}=R^2_{uv}\ .
\end{align}
Based on this assumption, we find that $a_4=a_8$. As a result, the inequalities in \eqref{eq:R11_bound1}-\eqref{eq:R11_bound3} reduce to
\begin{align}
\label{eq:R11_bound4}
R_{11} \leq \min \Big\{a_4,a_8,\frac{a_3}{2},\frac{a_4+a_8}{2}\Big\}=\min \Big \{a_4,\frac{1}{2}a_3\Big\} \overset{\eqref{eq:r11}}{=} r_{11}\ ,
\end{align}
which is the first constraint of the achievable rate region specified in \eqref{R:11}. \vspace{.1 in}

\noindent \underline{\textbf{Information Streams $\{W^1_{12},W^1_{21},W^2_{12},W^2_{21}\}$}:}\\
Next we consider the setting in which one of the two channels is strong. Without loss of generality, assume that $(h_1^2,h_2^2)=(\alpha_2,\alpha_1)$. In such a setting, the streams $W^1_{11}$ and $W^2_{11}$ are already decoded in the first stage of successive decoding at the receiver, and in the second stage, streams $W^1_{12}$ and $W^2_{21}$ will be jointly decoded. In the meantime, streams $\{W^1_{21},W^1_{22}\}$ from user 1, and streams $\{W^2_{12},W^2_{22}\}$ from user 2 are treated as noise. Successful decoding of these information streams is possible if the rates of these streams are within the capacity region of an equivalent MAC transmitting information streams $W^1_{12}$ and $W^2_{21}$ by user 1 and user 2, respectively, while treating other streams as noise. Hence, by following the same line of argument as in the case for weak channels, for various possible states in which streams $\{W^1_{12},W^1_{21},W^2_{12},W^2_{21}\}$ should be decoded, we obtain the following conditions.
\begin{itemize}
\item Channel state $(\alpha_1,\alpha_2)$: In the second stage, information streams $\{W^1_{21},W^2_{12}\}$ are decoded.
\begin{align}
\label{h1h2:alpha1_alpha2_21}
R^1_{21}&\leq a_{13} \dff C\left(\alpha_1 \beta^1_{21}\;,\;\alpha_1(\beta^1_{12}+\beta^1_{22})+\alpha_2(\beta^2_{21}+\beta^2_{22})\right)\ ,\\
\label{h1h2:alpha1_alpha2_12}
R^2_{12}&\leq a_{14}\dff C\left(\alpha_2 \beta^2_{12},\alpha_1 (\beta^1_{12}+\beta^1_{22})+\alpha_2(\beta^2_{21}+\beta^2_{22})\right) \ .\\
\label{h1h2:alpha1_alpha2_12_21}
R^1_{21}+R^2_{12}& \leq a_{15}\dff C\left(\alpha_1\beta^1_{21}+\alpha_2 \beta^2_{12},\alpha_1 (\beta^1_{12}+\beta^1_{22})+\alpha_2(\beta^2_{21}+\beta^2_{22})\right)\ .
\end{align}
\item Channel state $(\alpha_2,\alpha_1)$: In the second stage, information streams $\{W^1_{12},W^2_{21}\}$ are decoded.
\begin{align}
\label{h1h2:alpha2_alpha1_12}
R^1_{12}&\leq a_{16}\dff C\left(\alpha_2 \beta^1_{12},\alpha_2 (\beta^1_{21}+\beta^1_{22})+\alpha_1(\beta^2_{12}+\beta^2_{22})\right)\ ,\\
\label{h1h2:alpha2_alpha1_21}
R^2_{21}& \leq a_{17} \dff C\left(\alpha_1 \beta^2_{21},\alpha_2 (\beta^1_{21}+\beta^1_{22})+\alpha_1(\beta^2_{12}+\beta^2_{22})\right) \ ,\\
\label{h1h2:alpha2_alpha1_12_21}
R^1_{12}+R^2_{21}& \leq a_{18}\dff  C\left(\alpha_2 \beta^1_{12}+\alpha_1 \beta^2_{21},\alpha_2(\beta^1_{21}+\beta^1_{22})+\alpha_1(\beta^2_{12}+\beta^2_{22})\right)\ .
\end{align}
\item  Channel state $(\alpha_2,\alpha_2)$:
In the second stage, information streams  $\{W^1_{12},W^2_{12},W^1_{21},W^2_{21}\}$ are jointly decoded.  Based on this, we obtain the following set of constraints on the rates associated with these information streams. 
\begin{align}
\label{R'_1_12} R^1_{12} \;&\;  \leq  a_{19}\dff C\left(  {\alpha_2 \beta^1_{12} \; , \;\alpha_2 \beta^1_{22} +\alpha_2 \beta^2_{22}  } \right)\ , \\
\label{R'_1_21} R^1_{21} \;&\; \leq  a_{20}  \dff C\left(  {\alpha_2 \beta^1_{21} \; , \;\alpha_2 \beta^1_{22} +\alpha_2 \beta^2_{22}  } \right)\ , \\
\label{R'_2_12} R^2_{12} \;&\; \leq  a_{21}  \dff C\left(  {\alpha_2 \beta^2_{12} \; , \;\alpha_2 \beta^1_{22} +\alpha_2 \beta^2_{22}  } \right)\ ,\\
\label{R'_2_21} R^2_{21} \;&\; \leq  a_{22}  \dff C\left(  {\alpha_2 \beta^2_{21} \; , \;\alpha_2 \beta^1_{22} +\alpha_2 \beta^2_{22}  } \right)\ ,\\
\label{R'_1_12_1_21} R^1_{12}+R^1_{21} \;&\; \leq  a_{23}  \dff C\left(  {\alpha_2 \beta^1_{12} +\alpha_2 \beta^1_{21} \; , \;\alpha_2 \beta^1_{22} +\alpha_2 \beta^2_{22}  } \right)\ ,\\
\label{R'_1_12_2_12} R^1_{12}+R^2_{12} \;&\; \leq  a_{24}  \dff C\left(  {\alpha_2 \beta^1_{12} +\alpha_2 \beta^2_{12} \; , \;\alpha_2 \beta^1_{22} +\alpha_2 \beta^2_{22}  } \right)\ ,\\
\label{R'_1_12_2_21} R^1_{12}+R^2_{21} \;&\; \leq  a_{25}  \dff C\left(  {\alpha_2 \beta^1_{12} +\alpha_2 \beta^2_{21} \; , \;\alpha_2 \beta^1_{22} +\alpha_2 \beta^2_{22}  } \right)\ ,\\
\label{R'_1_21_2_12} R^1_{21}+R^2_{12} \;&\; \leq  a_{26}  \dff C\left(  {\alpha_2 \beta^1_{21} +\alpha_2 \beta^2_{12} \; , \;\alpha_2 \beta^1_{22} +\alpha_2 \beta^2_{22}  } \right)\ ,\\
\label{R'_1_21_2_21} R^1_{21}+R^2_{21} \;&\; \leq  a_{27}  \dff C\left(  {\alpha_2 \beta^1_{21} +\alpha_2 \beta^2_{21} \; , \;\alpha_2 \beta^1_{22} +\alpha_2 \beta^2_{22}  } \right)\ ,\\
\label{R'_2_12_2_21} R^2_{12}+R^2_{21} \;&\; \leq  a_{28}  \dff C\left(  {\alpha_2 \beta^2_{12} +\alpha_2 \beta^2_{21} \; , \;\alpha_2 \beta^1_{22} +\alpha_2 \beta^2_{22}  } \right)\ ,\\
\label{R'_1_12_1_21_2_12} R^1_{12}+R^1_{21}+R^2_{12} \;&\; \leq  a_{29}  \dff C\left(  {\alpha_2 (\beta^1_{12}+ \beta^1_{21}) +\alpha_2 \beta^2_{12} \; , \;\alpha_2 \beta^1_{22} +\alpha_2 \beta^2_{22}  } \right)\ ,\\
\label{R'_1_12_1_21_2_21} R^1_{12}+R^1_{21}+R^2_{21} \;&\; \leq  a_{30}  \dff C\left(  {\alpha_2 (\beta^1_{12}+ \beta^1_{21}) +\alpha_2 \beta^2_{21} \; , \;\alpha_2 \beta^1_{22} +\alpha_2 \beta^2_{22}  } \right)\ ,\\
\label{R'_1_12_2_12_2_21} R^1_{12}+R^2_{12}+R^2_{21} \;&\; \leq  a_{31}  \dff C\left(  {\alpha_2 \beta^1_{12} +\alpha_2 (\beta^2_{12}+\beta^2_{21}) \; , \;\alpha_2 \beta^1_{22} +\alpha_2 \beta^2_{22}  } \right)\ ,\\
\label{R'_1_21_2_12_2_21} R^1_{21}+R^2_{12}+R^2_{21} \;&\; \leq  a_{32}  \dff C\left(  {\alpha_2 \beta^1_{21} +\alpha_2 (\beta^2_{12}+\beta^2_{21}) \; , \;\alpha_2 \beta^1_{22} +\alpha_2 \beta^2_{22}  } \right)\ ,\\
\label{R'_1_12_1_21_2_12_2_21} R^1_{12}+R^1_{21}+R^2_{12}+R^2_{21} \;&\; \leq  a_{33}  \dff C\left(  {\alpha_2 (\beta^1_{12}+\beta^1_{21}) +\alpha_2 (\beta^2_{12}+\beta^2_{21}) \; , \;\alpha_2 \beta^1_{22} +\alpha_2 \beta^2_{22}  } \right)\ .
\end{align}
For the simplicity in notations, and in line with the desired achievable rate region, we assume that the corresponding information streams of the two users have identical rates and powers, as specified in \eqref{eq:equal}. Hence, based on \eqref{h1h2:alpha1_alpha2_12}, \eqref{h1h2:alpha2_alpha1_12}, \eqref{R'_1_12}, \eqref{R'_2_12}, and \eqref{R'_1_12_2_12} for $R_{12}$ it can be easily verified that $a_{14}=a_{16}$ and $a_{19}=a_{21}\geq \frac{1}{2}a_{24}$. Hence, we obtain
\begin{align}\label{R12}
R_{12} \; \leq \; \min\left\{ a_{14},a_{16}, a_{19}, a_{21}, \frac{1}{2} a_{24}\right\} =\min\left\{ a_{14}, \frac{1}{2} a_{24}\right\} \overset{\eqref{eq:r12}}{=}r_{12}\ ,
\end{align}
which is the constraint specified in \eqref{R:12}. Similarly, based on \eqref{h1h2:alpha1_alpha2_21}, \eqref{h1h2:alpha2_alpha1_21}, \eqref{R'_1_21}, \eqref{R'_2_21}, and \eqref{R'_1_21_2_21}, and by leveraging that $a_{13}=a_{17}$ and $a_{20}=a_{22}\geq \frac{1}{2}a_{27}$, for $R_{21}$ we obtain
\begin{align}\label{R21}
R_{21} \; \leq \; \min\left\{ a_{13},a_{17}, a_{20}, a_{22}, \frac{1}{2} a_{27}\right\} =\min\left\{ a_{13}, \frac{1}{2} a_{27}\right\} \overset{\eqref{eq:r21}}{=}r_{21}\ ,
\end{align}
which is the constraint specified in \eqref{R:21}. Next, for obtaining the bound on the sum-rate $(R_{12}+R_{21})$, we leverage~\eqref{h1h2:alpha1_alpha2_12_21}, \eqref{h1h2:alpha2_alpha1_12_21}, \eqref{R'_1_12_1_21}, \eqref{R'_1_12_2_21}, \eqref{R'_1_21_2_12}, \eqref{R'_2_12_2_21}, and \eqref{R'_1_12_1_21_2_12_2_21}, and obtain
\begin{align}
R_{12}+R_{21}\;&\;\leq \min\left\{a_{15},a_{18},a_{23},a_{25}, a_{26},  a_{28}, \frac{1}{2} a_{33}\right\}=\min\left\{a_{15},\frac{1}{2} a_{33}\right\} \overset{\eqref{r1_p}}{=} r_1\ ,
\end{align}
which follows the observation that $a_{15}=a_{18}$ and $a_{23}=a_{25}=a_{26}= a_{28}\geq \frac{1}{2}a_{33}$. By further invoking \eqref{R12} and \eqref{R21} we obtain
\begin{align}
R_{12}+R_{21}\;&\;\leq \min\left\{r_{12}+r_{21}, r_1\right\}\ ,
\end{align}
which after dropping the redundant term simplifies to \eqref{R:1221}.  Next, based on \eqref{R'_1_12_1_21_2_12} and \eqref{R'_1_12_2_12_2_21} we have $a_{29}=a_{31}$, and subsequently,
\begin{align}
2R_{12}+R_{21}\;\leq \; a_{29}\overset{\eqref{r12_p}}{=} r_{12}'\ .
\end{align} 
By further taking into account the constraints on the individual rates $R_{12}$ and $R_{21}$, as well as the constraint on $(R_{12}+R_{21})$, we get
\begin{align}
2R_{12}+R_{21}\;\leq \; \min\{2r_{12}+r_{21} ,r_{12}+r_1 , r_{12}'\}\ ,
\end{align} 
which after dropping the redundant terms, we obtain the desired constraint in \eqref{R:21221}. 
Finally, based on \eqref{R'_1_12_1_21_2_21} and \eqref{R'_1_21_2_12_2_21} we have $a_{30}=a_{32}$, and subsequently,
\begin{align}
R_{12}+2R_{21}\;\leq \; a_{30}\overset{\eqref{r21_p}}{=} r_{21}'\ .
\end{align} 
By further taking into account the constraints on the individual rates $R_{12}$ and $R_{21}$, as well as the constraint on $(R_{12}+R_{21})$, we get
\begin{align}
R_{12}+2R_{21}\;\leq \; \min\{r_{12}+2r_{21} ,r_{21}+r_1 , r_{21}'\}\ ,
\end{align}
which leads to \eqref{R:12221}.  
\end{itemize}

\vspace{.1 in}

\noindent \underline{\textbf{Information Streams $\{W^1_{22}, W^2_{22}\}$}:}\\
Information streams $\{W^1_{22}, W^2_{22}\}$ are jointly decoded only when both channels are strong, i.e.,  $(h_1^2,h_2^2)=(\alpha_2,\alpha_2)$. In this channel state, these two information streams are decoded after the rest are successfully decoded and removed. Hence, all the rates $R_{22}^1$ and $R_{22}^2$ that belong to a MAC consisting of two transmitters with information streams $\{W^1_{22}, W^2_{22}\}$ can be achieved simultaneously. This region is
\begin{align}
R_{22}^1 & \; \leq \; C(\alpha_2\beta_{22}^1 \; , \; 0)\ ,\\
R_{22}^2 & \; \leq \; C(\alpha_2\beta_{22}^2\; , \;0)\ ,\\
R_{22}^1+ R_{22}^2 & \; \leq \; C(\alpha_2(\beta_{22}^1+\beta_{22}^2)\; , \;0)\ .
\end{align}
Hence, under equal power allocation and equal rates in corresponding information streams, we have 
\begin{align}
R_{22}\;&\; \leq \frac{1}{2} C\left(2\alpha_2 \beta_{22}\; , \;0\right)  \overset{\eqref{r22}}{=} r_{22}\ ,
\end{align}
which establishes the constraint in \eqref{R:22}.
 
\section{Proof of Theorem~\ref{theorem:outer_bound2}}
\label{app:outer}
In this section, we derive an upper bound for the capacity region of the network corresponding to Theorem~\ref{theorem:achievable_rate2}. This region is derived by demonstrating that the rates outside this region cannot be achieved with arbitrarily small error rate. Achievable rate region presented in Theorem~\ref{theorem:achievable_rate2}, may or may not coincide with this outer bound depending on values of the power allocation parameters, channel coefficients, and  probability distribution function of the codebooks.

Consider $n$ channel uses, and consequently, codewords with length $n$. Define $\mathcal{W}^1_{ij}\dff\{1,\dots, 2^{nR^1_{ij}}\}$ and $\mathcal{W}^2_{ij}\dff\{1,\dots, 2^{nR^2_{ij}}\}$ as the set of indices of the messages in the information streams $W^1_{ij}$ and  $W^2_{ij}$, respectively. $M^1_{ij}$ and $M^2_{ij}$ are the inputs to the encoders drawn independently and uniformly from the set of messages ${\cal W}^1_{ij}$ and ${\cal W}^2_{ij}$, respectively. For $\ell\in\{1,2\}$ and $\forall M^{\ell}_{ij}$, define $X^{\ell n}_{ij}\dff\mathbf{X}^{\ell n}_{ij}(M^{\ell}_{ij})$ as the output of the encoder of user $\ell$. Similarly, define $(\hat M^1_{ij}(Y^n),\hat M^2_{ij}(Y^n))$ as the output of the decoder. Also, we define $\mathcal{X}^{n}\dff\{X^{1 n}_{11}, X^{1 n}_{12}, X^{1 n}_{21}, X^{1 n}_{22}, X^{2 n}_{11}, X^{2 n}_{12}, X^{2 n}_{21}, X^{2 n}_{22}\}$ as the set of all encoder outputs corresponding to both users.

\noindent \underline{\textbf{Information Streams $\{W^1_{22}, W^2_{22}\}$}:}\\
To determine an upper bound on the rates of $W^1_{22}$ and $W^2_{22}$, we can consider channel state $(\alpha_2,\alpha_2)$ since this is the only channel condition where these two codebooks are decoded. We denote the average error probability by
\begin{align}
{\sf P}_n\dff \mathbb{P}\Big((\hat M^1_{22},\hat M^2_{22})\neq (M^1_{22},M^2_{22})\Big)\ .
\end{align}
By Fano's inequality, conditional entropy of $(M^1_{22},M^2_{22})$ given $Y^n_{22}$ can be expressed as
\begin{align}
\label{equation:Fano}
H(M^1_{22},M^2_{22}|Y^n_{22})&\leq n(R^1_{22}+R^2_{22}){\sf P}_n+H({\sf P}_n)\dff n\epsilon_{22,n}\ ,
\end{align}
where ${\sf P}_n\rightarrow 0$, and subsequently, $\epsilon_{22,n}\rightarrow 0$, as $n\rightarrow \infty$. Hence,
\begin{align}
& n(R^1_{22}+R^2_{22}) 
\\&=H(M^1_{22},M^2_{22})
\\&=I(M^1_{22},M^2_{22};Y^n_{22})+H(M^1_{22},M^2_{22}|Y^n_{22})\\
\label{eq:1} & \leq I(M^1_{22},M^2_{22};Y^n_{22})+n\epsilon_{22,n}\\
\label{eq:2} & \leq I(\mathbf{X}^{1n}_{22}(M^1_{22}),\mathbf{X}^{2n}_{22}(M^2_{22});Y^n_{22})+n\epsilon_{22,n}\\
& = I(X^{1n}_{22},X^{2n}_{22};Y^n_{22})+n\epsilon_{22,n}\\
& \leq I(X^{1n}_{22},X^{2n}_{22};Y^n_{22} \; |\; \mathcal{X}^{n}\backslash \{X^{1n}_{22}, X^{2n}_{22}\})+n\epsilon_{22,n}\\
& = h(Y^n_{22}\;|\;\mathcal{X}^{n}\backslash \{X^{1n}_{22}, X^{2n}_{22}\})-h(Y^n_{22}|\mathcal{X}^{n})+n\epsilon_{22,n}\\
\label{eq:3} &  = \sum\limits_{i=1}^n h(Y_{22,i}|\mathcal{X}_{i}\backslash \{X^{1n}_{22}, X^{2n}_{22}\})-\sum\limits_{i=1}^n h(Y_{22,i}| \mathcal{X}_{i})+n\epsilon_{22,n}\\
&  = \sum\limits_{i=1}^n h(\sqrt{\alpha_2}X^{1}_{22,i} + \sqrt{\alpha_2} X^{2}_{22,i} + N_{22})-\sum\limits_{i=1}^n h(N_{22})+n\epsilon_{22,n}
\end{align}
where, $\mathcal{X}_{i}$ denotes a set consisting of the $i$th component of each element of $\mathcal{X}^{n}$, \eqref{eq:1} follows by applying inequality~\eqref{equation:Fano}; \eqref{eq:2} follows from  data processing inequality; \eqref{eq:3} follows from the chain rule of entropy function and due to the channel being memoryless. On the other hand, noting that 
\begin{align}
\sum\limits_{i=1}^n h(\sqrt{\alpha_2}X^{1}_{22,i}+\sqrt{\alpha_2}X^{2}_{22,i} + N_{22}) & \; \leq \; \frac{n}{2}\log\Big(2\pi e(2\alpha_2P + 1)\Big)
\end{align}
implies that there exists constant $\beta_{22}\in(0,1)$ corresponding to which
\begin{align}
\label{eq:k1b22} \sum\limits_{i=1}^n h(\sqrt{\alpha_2}X^{1}_{22,i}+\sqrt{\alpha_2}X^{2}_{22,i} + N_{22}) & \; = \; \frac{n}{2}\log\Big(2\pi e(2\alpha_2\beta_{22}P + 1)\Big)
\end{align}
By leveraging $\sum\limits_{i=1}^n h(N_{22})=\frac{n}{2}\log(2\pi e)$ we find 
\begin{align}
n(R^1_{22}+R^2_{22}) & \leq \sum\limits_{i=1}^n h(\sqrt{\alpha_2}X^{1}_{22,i} + \sqrt{\alpha_2} X^{2}_{22,i} + N_{22})-\sum\limits_{i=1}^n h(N_{22})+n\epsilon_{22,n}\\
& = \frac{n}{2}\log\Big(2\pi e(2\alpha_2\beta_{22}P + 1)\Big)-\frac{n}{2}\log\Big(2\pi e\Big)\\
& 
= \frac{n}{2}\log\Big(2\alpha_2\beta_{22}P + 1\Big)\ ,
\end{align}
and as a result
\begin{align}
R^1_{22}+R^2_{22} & \leq \frac{1}{2}\log\Big(2\alpha_2\beta_{22}P + 1\Big) =C(2\alpha_2\beta_{22},0)= 2r_{22}\ .
\end{align}

\noindent \underline{\textbf{Information Streams $\{W^1_{11}, W^2_{11}\}$}:}\\
By following the same steps presented for the information streams $(W^1_{22}, W^2_{22})$ in the previous part we have
\begin{align}
n(R^1_{11}+R^2_{11}) & \; \leq \; I(X^{1n}_{11},X^{2n}_{11};Y^n_{11})+n\epsilon_{11,n}\\
& \; = \; h(Y^n_{11})-h(Y^n_{11}|X^{1n}_{11}, X^{2n}_{11})+n\epsilon_{11,n}\\
& \; = \; \sum\limits_{i=1}^n h(Y_{11,i})-\sum\limits_{i=1}^n h(Y_{11,i}| X^{1}_{11,i}, X^{2}_{11,i})+n\epsilon_{11,n}\ .
\end{align}
Next, note that channel
\begin{align}
Y_{11,i}=\sqrt{\alpha_1}X^1_i+\sqrt{\alpha_1}X^2_i + N_{11, i}\ 
\end{align} 
is statistically equivalent to
\begin{align}
\tilde{Y}_{11,i}\dff \sqrt{\alpha_2}X^1_i+\sqrt{\alpha_2}X^2_i + N_{22, i} + \tilde{N}_{11, i}\ ,
\end{align}
where ${\rm var}(\tilde{N}_{11, i})=\frac{\alpha_2}{\alpha_1}-1$, and $\tilde{Y}_{11,i}=Y_{22,i} + \tilde{N}_{11, i}$. Therefore, 
\begin{align}
n(R^1_{11}+R^2_{11}) & \; \leq \; \sum\limits_{i=1}^n h(\tilde{Y}_{11,i})-\sum\limits_{i=1}^n h(\tilde{Y}_{11,i}| X^{1}_{11,i}, X^{2}_{11,i})+n\epsilon_{11,n}\ .
\end{align}
Next, note that
\begin{align}
\sum\limits_{i=1}^n h(\tilde{Y}_{11,i}) & \; \leq \; \frac{n}{2}\log\Big(2\pi e(2\alpha_2P + \frac{\alpha_2}{\alpha_1})\Big)\ ,
\end{align}
and
\begin{align}
\sum\limits_{i=1}^n h(\tilde{Y}_{11,i}| X^{1n}_{11}, X^{2n}_{11}) & \; = \; \sum\limits_{i=1}^n h(Y_{22,i}+\tilde{N}_{11, i}| X^{1}_{11,i}, X^{2}_{11,i})\\
& \; \geq \; \sum\limits_{i=1}^n h(Y_{22,i}+\tilde{N}_{11, i}| X^{1}_{11,i}, X^{2}_{11,i}, X^{1}_{12,i}, X^{2}_{12,i},X^{1}_{21,i}, X^{2}_{21,i}) \\
& \label{eq:103} \; = \; \frac{n}{2}\log\Big(2\pi e(2\alpha_2\beta_{22}P + \frac{\alpha_2}{\alpha_1})\Big)\ .
\end{align}
As a result, there exist $\beta_{11}\in [0,1-\beta_{22}]$ such that
\begin{align}
\sum\limits_{i=1}^n h(\tilde{Y}_{11,i}| X^{1n}_{11}, X^{2n}_{11}) \; = \; \frac{n}{2}\log\Big(2\pi e(2\alpha_2(1-\beta_{11})P + \frac{\alpha_2}{\alpha_1})\Big)\ .
\end{align}
Therefore, 
\begin{align}
n(R^1_{11}+R^2_{11}) & \; \leq \; \frac{n}{2}\log\Big(2\pi e(2\alpha_2P + \frac{\alpha_2}{\alpha_1})\Big) - \frac{n}{2}\log\Big(2\pi e(2\alpha_2(1-\beta_{11})P + \frac{\alpha_2}{\alpha_1})\Big) \ ,
\end{align}
which implies that
\begin{align}
R^1_{11}+R^2_{11} & \; \leq \; \frac{1}{2}\log\Big(1 + \frac{2\alpha_1(1-(1-\beta_{11}))P}{1 + 2\alpha_1(1-\beta_{11})P}\Big) \\
& \; = \; \frac{1}{2}\log\Big(1 + \frac{2\alpha_1\beta_{11}P}{1 + 2\alpha_1(1-\beta_{11})P}\Big) = a_3
\end{align}

\noindent \underline{\textbf{Information Streams $\{W^1_{12},W^1_{21},W^2_{12},W^2_{21}\}$}:}\\
Next, we determine an outer bound on the rates of information streams $W^i_{12}$ and $W^i_{21}$ for  $i\in\{1,2\}$. For this purpose, we focus the channel state $(h^2_1,h^2_2)=(\alpha_2,\alpha_2)$ and obtaining a set of outer bounds. By following the same line of analysis, it can be readily shown that the constraints enforced by other channel state combinations will be redundant. 
\begin{align}
&n(R^1_{12}+R^2_{12}) \\
& \; \leq \; I(X^{1n}_{12},X^{2n}_{12};Y^n_{22})+n\epsilon_{22',n}\\
& \; \leq \; I(X^{1n}_{12},X^{2n}_{12};Y^n_{22} \; |\; \mathcal{X}^{n}\backslash\{X^{1n}_{12}, X^{2n}_{12}, X^{1n}_{22},  X^{2n}_{22}\})+n\epsilon_{22',n}\\
& \; = \; h(Y^n_{22}\;|\;\mathcal{X}^{n}\backslash\{X^{1n}_{12}, X^{2n}_{12}, X^{1n}_{22},  X^{2n}_{22}\}))-h(Y^n_{22}|\mathcal{X}^{n}\backslash\{X^{1n}_{22} , X^{2n}_{22}\})+n\epsilon_{22',n}\\
&  \; = \; \sum\limits_{i=1}^n h(Y_{22,i}|\mathcal{X}_{i}\backslash\{X^{1}_{12,i}, X^{2}_{12,i}, X^{1}_{22,i},  X^{2}_{22,i}\}))-\sum\limits_{i=1}^n h(Y_{22,i}|\mathcal{X}_{i}\backslash\{X^{1}_{22,i}, X^{2}_{22,i}\})+n\epsilon_{22',n}\ .
\end{align}
Now, from ~\eqref{eq:k1b22} we obtain 
\begin{align}
\sum\limits_{i=1}^n h(Y_{22,i}|\mathcal{X}_{i}\backslash\{X^{1}_{22,i}, X^{2}_{22,i}\}) = \frac{n}{2}\log\Big(2\pi e(2\alpha_2\beta_{22}P + 1)\Big)\ .
\end{align}
Also, since 
\begin{align}
h(Y_{22,i}|\mathcal{X}_{i}\backslash\{X^{1}_{22,i}, X^{2}_{22,i}\}) & = h(\sqrt{\alpha_2}X^{1}_{22,i}+\sqrt{\alpha_2}X^{2}_{22,i} + N_{22})\ ,
\end{align}
and
\begin{align}
h(Y_{22,i}|\mathcal{X}_{i}\backslash\{X^{1}_{12,i}, X^{2}_{12,i},X^{1}_{22,i}, X^{2}_{22,i}\}) & = h(\sqrt{\alpha_2}(X^{1}_{12,i} + X^{1}_{22,i})+\sqrt{\alpha_2}(X^{2}_{12,i}+X^{2}_{22,i}) + N_{22})\ ,
\end{align} 
by comparing the variance values of arguments of the two entropy terms we have 
\begin{align}
h(Y_{22,i}|\mathcal{X}_{i}\backslash\{X^{1}_{12,i}, X^{2}_{12,i},X^{1}_{22,i}, X^{2}_{22,i}\}) \geq h(Y_{22,i}|\mathcal{X}_{i}\backslash\{X^{1}_{22,i}, X^{2}_{22,i}\})\ .
\end{align}
Therefore, there exists $\beta_{12}\in [0,1-\beta_{22}]$ such that
\begin{align}
\sum\limits_{i=1}^n h(Y_{22,i}|\mathcal{X}_{i}\backslash\{X^{1}_{12,i}, X^{2}_{12,i},X^{1}_{22,i}, X^{2}_{22,i}\})  = \frac{n}{2}\log\Big(2\pi e(2\alpha_2(\beta_{22}+\beta_{12})P + 1)\Big)\ ,
\end{align}
and subsequently,
\begin{align}
& n(R^1_{12}+R^2_{12}) \\
& \; \leq \; \sum\limits_{i=1}^n h(Y_{22,i}|\mathcal{X}_{i}\backslash\{X^{1}_{12,i}, X^{2}_{12,i},X^{1}_{22,i}, X^{2}_{22,i}\}) -\sum\limits_{i=1}^n h(Y_{22,i}|\mathcal{X}_{i}\backslash\{X^{1}_{22,i}, X^{2}_{22,i}\})+n\epsilon_{22',n} \\
& \; = \; \frac{n}{2}\log\Big(2\pi e(2\alpha_2(\beta_{22}+\beta_{12})P + 1)\Big) - \frac{n}{2}\log\Big(2\pi e(2\alpha_2\beta_{22}P + 1)\Big) + n\epsilon_{22',n} \\
& \; = \; \frac{n}{2}\log\Big( 1 + \frac{2\alpha_2\beta_{12}P}{1 + 2\alpha_2\beta_{22}P}\Big) + n\epsilon_{22',n}\ .
\end{align}
As a result,
\begin{align}
R^1_{12}+R^2_{12} & \; \leq \; \frac{1}{2}\log\Big( 1 + \frac{2\alpha_2\beta_{12}P}{1 + 2\alpha_2\beta_{22}P}\Big) \; = \; a_{24}\ .
\end{align}
Similarly, we can find the following upper bound for information streams $(W^1_{21}, W^2_{21})$:
\begin{align}
 R^1_{21}+R^2_{21} & \; \leq \; \frac{1}{2}\log\Big( 1 + \frac{2\alpha_2\beta_{21}P}{1 + 2\alpha_2\beta_{22}P}\Big) = a_{27}\ ,
\end{align}
which concludes the proof.

\section{Proof of Corollary~\ref{corollary:1}}
\label{appendix:corollary:1}

Similarity in rate splitting, superposition coding, and successive decoding becomes apparent by directly comparing the entries of Table~\ref{table:tse} and Table~\ref{table:MAC} after setting the power allocated to streams $\{W^1_{21},W^2_{21},W^1_{22},W^2_{22}\}$ equal to zero along with a renaming of the information streams $\{W^i_{11},W^i_{12}\}$ to $\{W^i_{1},W^i_{2}\}$ for $i\in\{1,2\}$.

In the next step, we show that for the given power allocation scheme specified in Corollary~\ref{corollary:1}, the achievable rate region characterized in Theorem~\ref{theorem:achievable_rate2} coincides with the capacity region presented in \cite{Minero:ISIT07}. We start by setting the power allocated to the streams $\{W^1_{21},W^2_{21},W^1_{22},W^2_{22}\}$ to zero, i.e., 
\begin{align}
\beta^1_{21}=\beta^2_{21}=\beta^1_{22}=\beta^2_{22}\;=\; 0\ .
\end{align}
Based on this, the part of the achievable rate region characterized in \eqref{eq:R11_bound1}-\eqref{eq:R11_bound3} simplifies to:
\begin{align}
\label{eq:R11_bound1_2} R^1_{11}\;&\; \leq a_4 = C\big(\alpha_1 \beta^1_{11},\alpha_1\beta_{12}^1+\alpha_2\beta^2_{12}\big)\\
\label{eq:R11_bound2_2}  R^2_{11}\;&\; \leq  a_8= C\big(\alpha_1 \beta^2_{11},\alpha_2\beta_{12}^1+\alpha_1\beta^2_{12}\big)\\
\label{eq:R11_bound3_2} R^1_{11}+R^2_{11}\;&\; \leq \min\{a_3,a_6,a_9,a_4+a_8\}\ .
\end{align}
By comparing the capacity region presented in \cite{Minero:ISIT07}, the sum-rate of the two lower information streams $W_{11}^1$ and $W_{11}^2$ will be less than or equal to the minimum of two sum-rates.  One is the minimum of the sum-rate under different combination of channel states when one or both users have weak channels, i.e., $\min\{a_3,a_6,a_9\}$. The second sum-rate constrained is obtained by aggregating the constraints on the individual rates for information streams $W_{11}^1$ and $W_{11}^2$ which allow them to be decodable in all four possible channel states. The individual rate constraints take their smallest values when the interfering channel is strong while the user's own channel coefficient is weak. Furthermore, the survivors of all the constraints in \eqref{R'_1_12}-\eqref{R'_1_12_1_21_2_12_2_21} simplify to:
\begin{align}
\label{R'_1_12_2} R^1_{12} \;&\;  \leq  a_{19}\dff C\left(  {\alpha_2 \beta^1_{12} \; , 0  } \right) \\
\label{R'_2_12_2} R^2_{12} \;&\; \leq  a_{21}  \dff C\left(  {\alpha_2 \beta^2_{12} \; , \; 0  } \right)\\
\label{R'_1_12_2_12_2} R^1_{12}+R^2_{12} \;&\; \leq  a_{24}  \dff C\left(  {\alpha_2 \beta^1_{12} +\alpha_2 \beta^2_{12} \; , \; 0} \right)\ .
\end{align}
The combination of \eqref{eq:R11_bound3_2} and \eqref{R'_1_12_2_12_2} establishes the achievable rate region based on the codebook assignment specified in \cite{Minero:ISIT07} and in  Table~\ref{table:tse}.

\section{Values of $\{b_i:i\in\{1,\dots,12\}\}$}
\label{sec:app_b}
By defining the sets
\begin{align}
\label{eq:J1} J_1(u,v) & \dff \{j\in \{u,\dots, v-1\}\}\ ,\\
\label{eq:J2} J_2(u,v) & \dff \{(j,k): k\in\{u,\dots,v-1\} \; \; \& \;\; j\in\{v+1,\dots,\ell\}\}\ ,\\
\label{eq:J3} J_3(u,v) & \dff \{(j,k): j\leq k  \; \; \& \;\;  j,k\in\{v,\dots,\ell\}\}\ ,
\end{align}
we have
{
\begin{align}
\label{eq:b1} b_1(u,v) & \dff  \quad   \min_{j\in J_1}\left\{ C\big ( {\alpha_v \beta_{uv}\; , \; \alpha_jB_{1}(j,u,v)+\alpha_vB_{2}(j,u,v)}\big) \right \}\ ,\\
\label{eq:b2} b_2(u,v) & \dff C\big ( {\alpha_v \beta_{uv} \; , \; (\alpha_v+\alpha_\ell)B_{3}(u,v)}\big)\ ,\\
\label{eq:b3} b_3(u,v)  & \dff  C\big ( {2\alpha_v \beta_{uv}\; , \; 2\alpha_vB_{3}(u,v)} \big)\ ,\\
\label{eq:b4} b_4(u,v) & \dff C\big ( {\alpha_u \beta_{vu}\; , \;  \alpha_\ell B_{4}(u,v)+\alpha_uB_{5}(u,v)}\big)\ ,\\
\label{eq:b5} b_5(u,v) & \dff     C\big ( 2\alpha_v \beta_{vu} \; , \;  2\alpha_vB_3(u,v) \big)\ ,\\
\label{eq:b6} b_6(u,v)&  \dff  \;\;  \min_{(j,k)\in J_2}\left\{ C\big ( {\alpha_j \beta_{vu}+\alpha_k\beta_{uv}\; , \; \alpha_jB_{6}(k,u,v)+\alpha_kB_{7}(k,u,v)}\big)\right \}\ ,\\
\label{eq:b7} b_7(u,v)&  \dff C\big ( {\alpha_v (\beta_{uv}+\beta_{vu})\; , \; (\alpha_v+\alpha_\ell)B_{3}(u,v)}\big)\ ,\\
\label{eq:b8} b_8(u,v)&  \dff    C\big ( {2\alpha_v (\beta_{uv}+\beta_{vu})\; , \; 2\alpha_vB_{3}(u,v)} \big)\ ,\\
\label{eq:b9} b_9(u,v) & \dff \min_{(j,k)\in J_3} \left\{ C\big (\alpha_j(\beta_{uv}+\beta_{vu})+\alpha_k\beta_{uv}\; , \;  (\alpha_j+\alpha_k)B_{3}(u,v)\big)\right \}\ ,\\
\label{eq:b10} b_{10}(u,v) & \dff \min_{j,k\in J_3}\left\{ C\big ( {\alpha_j(\beta_{uv}+\beta_{vu})+\alpha_k \beta_{vu}\; , \; (\alpha_j+\alpha_k)B_{3}(u,v))}\big )\right \}\ ,\\
\label{eq:b11} b_{11}(u) & \dff C\big ( {\alpha_u\beta_{uu}\; , \; (\alpha_u+\alpha_\ell)B_8(u,u)}\big)\ ,\\
\label{eq:b12} \mbox{and}\qquad b_{12}(u) & \dff C\big ({2\alpha_u\beta_{uu} \; , \; 2\alpha_uB_8(u,u)}\big)\ ,
\end{align}
}
where were have defined
\begin{align}
\label{eq:B1} B_{1}(j,u,v) & \dff 1 - \sum_{n=1}^j\sum_{m=1}^{v-1}\beta_{mn}-\sum_{n=1}^u\beta_{vn}\ ,\\
\label{eq:B2} B_{2}(j,u,v) & \dff 1 - \sum_{n=1}^{v-1}\sum_{m=1}^j\beta_{mn}-\sum_{n=1}^u\beta_{nv}\ ,\\
\label{eq:B3} B_{3}(u,v) & \dff 1 - \sum_{n=1}^{v-1}\sum_{m=1}^{v-1}\beta_{mn}-\sum_{n=1}^u\beta_{vn}-\sum_{n=1}^u\beta_{nv}\ ,\\
\label{eq:B4} B_{4}(u,v) & \dff 1 -  \sum_{n=1}^{v-1}\sum_{m=1}^u\beta_{mn}-\sum_{n=1}^u\beta_{nv} \ ,\\
\label{eq:B5} B_{5}(u,v) & \dff 1 - \sum_{n=1}^{u}\sum_{m=1}^{v-1}\beta_{mn}-\sum_{n=1}^u\beta_{vn}\ ,\\
\label{eq:B6} B_{6}(j,u,v) & \dff 1 -  \sum_{n=1}^j\sum_{m=1}^{v-1}\beta_{mn}-\sum_{n=1}^u\beta_{vn} \ ,\\
\label{eq:B7} B_{7}(j,u,v) &  \dff 1 -  \sum_{n=1}^{j}\sum_{m=1}^{v-1}\beta_{nm}-\sum_{n=1}^u\beta_{nv}\\
\label{eq:B8} \mbox{and}\qquad B_{8}(u,v) &  \dff 1 -  \sum_{n=1}^u\sum_{m=1}^v\beta_{mn}\ .
\end{align}

\section{Proof of Corollary~\ref{corollary:2}}
\label{app:cor:2}
\noindent \underline{\textbf{Information Streams $\{W^1_{11},W^2_{11}\}$}:}\\
From \eqref{R_7} we have
\begin{align}
R_{11} \leq \min\left\{b_{11}(1), \frac{b_{12}(1)}{2}\right\}\ ,
\end{align}
and based on~\eqref{h1h2:alpha1_alpha2_1},~\eqref{eq:b11}, and~\eqref{eq:B8} we find that
\begin{align}
b_{11}(1) = C\Big(\alpha_1\beta_{11}, (\alpha_1+\alpha_2)\bar\beta_{11}\Big)=a_4\ .
\end{align}
Similarly, based on~\eqref{h1h2:alpha1_alpha1_3},~\eqref{eq:b12}, and~\eqref{eq:B8} we have
\begin{align}
b_{12}(1) = C\Big(2\alpha_1\beta_{11}, 2\alpha_1\bar\beta_{11}\Big) = a_3\ .
\end{align}
Therefore, by combining the above two representations for $b_{11}(1)$ and $b_{12}(1)$ we obtain $R_{11}\leq\min\{a_4, \frac{1}{2}a_3\}=r_{11}$ which is the constraint in~\eqref{R:11}\\

\noindent \underline{\textbf{Information Streams $\{W^1_{22},W^1_{22}\}$}:}\\
From~\eqref{R_7} we have
\begin{align}
R_{22} \leq \min\left\{b_{11}(2), \frac{b_{12}(2)}{2}\right\}\ .
\end{align}
By leveraging~\eqref{eq:b11} and~\eqref{eq:B8} we obtain
\begin{align}
b_{11}(2) = C\Big(\alpha_2\beta_{22}, 2\alpha_2(1-\beta_{11}-\beta_{12}-\beta_{21}-\beta_{22})\Big)=C\Big(\alpha_2\beta_{22}, 0\Big)\ .
\end{align}
Similarly, based on~\eqref{eq:b12} and~\eqref{eq:B8} we have
\begin{align}
b_{12}(2) = C\Big(2\alpha_2\beta_{22}, 0\Big)\ .
\end{align}
 Therefore, 
\begin{align}
R_{22} & \leq \min\left\{C\Big(\alpha_2\beta_{22}, 0\Big), \frac{1}{2}C\Big(2\alpha_2\beta_{22}, 0\Big)\right\} = \frac{1}{2}C\Big(2\alpha_2\beta_{22}, 0\Big) = r_{22}\ ,
\end{align}
which is the constraint in~\eqref{R:22}. \\

\noindent \underline{\textbf{Information Streams $\{W^1_{12},W^1_{21},W^2_{12},W^2_{21}\}$}:}\\
From~\eqref{R:uv} we have
\begin{align}
R_{12} \leq \min\left\{b_1(1, 2), b_2(1, 2), \frac{1}{2}b_3(1, 2)\right\}\ .
\end{align}
By leveraging~\eqref{h1h2:alpha1_alpha2_12},~\eqref{eq:b1},~\eqref{eq:B1}, and~\eqref{eq:B2} and we have
\begin{align}
b_{1}(1, 2) & = C\Big(\alpha_2\beta_{12}, \alpha_1(1-\beta_{11}-\beta_{21})+\alpha_2(1-\beta_{11}-\beta_{12})\Big)\\
& = C\Big(\alpha_2\beta_{12}, \alpha_1(\beta_{12}+\beta_{22})+\alpha_2(\beta_{21}+\beta_{22})\Big)\\
& = a_{14}\ .
\end{align}
Furthermore, from~\eqref{eq:b2} and~\eqref{eq:B3} we have
\begin{align}
b_{2}(1, 2) = C\Big(\alpha_2\beta_{12}, 2\alpha_2(1-\beta_{11}-\beta_{12}-\beta_{21})\Big) = C\Big(\alpha_2\beta_{12}, 2\alpha_2\beta_{22}\Big)\ ,
\end{align}
and based on~\eqref{eq:b3},~\eqref{eq:B3}, and~\eqref{R'_1_12_2_12} we have 
\begin{align}
b_{3}(1, 2) & = C\Big(2\alpha_2\beta_{12}, 2\alpha_2(1-\beta_{11}-\beta_{21}-\beta_{12})\Big)\\
& = C\Big(2\alpha_2\beta_{12}, 2\alpha_2\beta_{22}\Big)\\
& = a_{24}\ .
\end{align}
Since $\min\{b_2(1,2),\frac{1}{2}b_3(1,2)\}=\frac{1}{2}b_3(1, 2)$, therefore $R_{12}\leq \min\{a_{14}, \frac{1}{2}a_{24}\}=r_{12}$, which is the constraint in~\eqref{R:12}. Similarly, from~\eqref{R:vu}, we can recover the constraint in~\eqref{R:21} of Theorem~\eqref{theorem:achievable_rate2}.

In order to recover the sum-rate constraint in~\eqref{R:1221}, we set $u=1$ and $v=2$ and based on~\eqref{R:{vv}} we obtain
\begin{align}
R_{12}+R_{21}\leq\min\left\{b_6(1,2), b_7(1,2), \frac{1}{2}b_8(1,2)\right\}\ .
\end{align}
From the definition in~\eqref{eq:J2} we have $J_2=\{(1, 2)\}$. Therefore, by using~\eqref{eq:b6}, ~\eqref{eq:B6}, and~\eqref{eq:B7} we have
 \begin{align}
b_6(1, 2) & = C\Big(\alpha_1\beta_{21}+\alpha_2\beta_{12}, \alpha_1(1-\beta_{11}-\beta_{21})+\alpha_2(1-\beta_{11}-\beta_{12})\Big)\\
& = C\Big(\alpha_1\beta_{21}+\alpha_2\beta_{12}, \alpha_1(\beta_{12}+\beta_{22})+\alpha_2(\beta_{21}+\beta_{22})\Big)\ .
\end{align}
Likewise, from~\eqref{eq:b7}-\eqref{eq:b8} and \eqref{eq:B3} we have 
\begin{align}
b_7(1,2) & = C\Big(\alpha_2(\beta_{12}+\beta_{21}), 2\alpha_2\beta_{22}\Big)\ ,\\
\mbox{and}\qquad b_8(1,2) & = C\Big(2\alpha_2(\beta_{12}+\beta_{21}), 2\alpha_2\beta_{22}\Big)\ .
\end{align}
By noting that $b_7(1,2)\geq \frac{1}{2}b_8(1,2)$ we obtain
\begin{align}
R_{12}+R_{21}\leq\min\{b_6(1,2),\frac{1}{2}b_8(1,2)\}=r_1\ ,
\end{align}
which is the constraint in~\eqref{R:1221}. Finally, by setting $u=1$ and $v=2$ in ~\eqref{R:5}, we find that $2R_{12}+R_{21}\leq b_9(1,2)$. From Equation~\eqref{eq:J3}, $J_3=\{(2,2)\}$, applying which to~\eqref{eq:b9}, and leveraging~\eqref{eq:B3} and~\eqref{r12_p} yields
\begin{align}
b_9(1,2) & = C\Big(\alpha_2(\beta_{12}+\beta_{21})+\alpha_2\beta_{12},2\alpha_2(1-\beta_{11}-\beta_{12}-\beta_{21})\Big) \\
& = C\Big(\alpha_2(2\beta_{12}+\beta_{21}),2\alpha_2\beta_{22}\Big) \\
& = r'_{12}\ ,
\end{align}
which is the constraint in~\eqref{R:21221}. The constraint in~\eqref{R:12221} can be recovered in a similar fashion.

\bibliographystyle{IEEEtran}
\bibliography{TCOM_20171222}

\end{document}